\def\LT{{\mathbb{LT}}}
\def\bfK{\mathbf K}
\def\bft{\mathbf t}
\def\ubft{\underline\bft}
\def\bfsigma{{\mbox{\boldmath${\sigma}$}}}
\def\ubfsigma{\underline{\mbox{\boldmath${\bfsigma}$}}}
\def\bfPi{{\mbox{\boldmath${\Pi}$}}}
\newcommand{\N}{{\mathbb{N}}}
\newcommand{\R}{\mathbb{R}}
\newtheorem{theo}{Theorem}
\newtheorem{cor}{Corollary}
\newtheorem{rmk}{Remark}
\newtheorem{prop}{Proposition}
\numberwithin{equation}{section}
\begin{document}

\begin{center}

{ \large \bf Critical region for an Ising model coupled to causal dynamical triangulations}

\vspace{30pt}

{\sl J.~ Cerda-Hern\'andez}$\,^{a}$ 

\vspace{24pt}

{\footnotesize


$^a$~Institute of Mathematics, Statistics and Scientic Computation,\\ 
University of Campinas - UNICAMP,\\
Rua S\'ergio Buarque de Holanda 651, CEP 13083-859, Campinas, SP, Brazil.\\
E-mail: javier@ime.usp.br.
}

\vspace{48pt}

\end{center}

\begin{abstract}
This paper extends results obtained by \cite{HeAnYuZo:2013} for the annealed  Ising model 
coupled to two-dimensional causal dynamical triangulations. We employ the Fortuin-Kasteleyn (FK) representation 
in order to determine a region in the quadrant of parameters $\beta,\mu>0$ where the critical curve for 
the annealed model is possibly located. This is done by outlining a region where the model has a unique 
infinite-volume Gibbs measure, and a region where the finite-volume Gibbs measure does not have weak limit (in fact, 
does not exist if the volume is large enough). We also improve the region of subcritical behaviour of the model, 
and provide a better approximation of the free energy.
\\ \\
\textbf{2000 MSC.} 60F05, 60J60, 60J80.\\
\textbf{Keywords:} causal dynamical triangulation (CDT), Ising model, partition function, Gibbs measure, 
transfer matrix, FK representation
\end{abstract}

\newpage

\section{Introduction. A review of related results} 
Models of planar random geometry appear in physics in the context of two-dimensional quantum gravity 
and provide an interplay between mathematical physics and probability theory.  

Causal dynamical triangulation (CDT) and its predecessor 
dynamical triangulation (DT), introduced by Ambj{\o}rn and Loll (see \cite{Ambjorn:1998xu}),  constitute attemps to provide a  meaning to the formal expressions appearing in the path integral 
quantization of gravity (see \cite{Ambjorn:1997di}, 
\cite{Ambjorn:2006} for an overview). The idea is to approximate the path integral by 
changing integration with respect to continuous random geometry by integration with respect to DTs or CDTs.  
As a result, the path integral with respect to continuous random geometries is replaced by a weighted sum over discrete geometries, 
where the weights are defined  by the discrete form of the original action. In this paper we adopt the framework of  
CDTs coupled with an Ising spin system.

Putting a spin system  on  the collection of all causal triangulations is  interpreted as a {\it coupling  gravity with matter}.
This type of coupling has been a subject of persistent interest in physics once the matrix model methods were 
successfully applied to the Ising model on random lattice with coordination number $4$ (see \cite{Kazakov:1986hu} and 
\cite{Boulatov:1986sb} for details). In this article we discuss the Ising model coupled to causal triangulations in details, 
study the termodinamic function known as free energy, and formally define  the critical curve for the annealed model. 
This annealed model was first  introduced by \cite{Ambjorn:1999gi}. 
This model was also  studied by \cite{Benedetti:2006rv}, where the authors 
develop a systematic counting of embedded graphs and evaluate thermodynamic functions at high and low temperature expansion (see
\cite{Ambjorn:2008jg} and \cite{HeAnYuZo:2013} for other progress made for the annealed model).

The main advantage of the CDTs over the DTs is that the ensemble of CDTs is more regular than the ensemble of the DTs, 
in the sense that does not permite spacial topology changes and leads to a fractal dimension $d_H=2$
(see \cite{Ambjorn:1998xu}, \cite{Ambjorn:1997di}, \cite{Durhuus:2009sm}, \cite{Durhuus:2006}). On the other hand, the causality 
constraints still make it difficult to find an analytical 
solution of the Ising model coupled to CDTs, however
some progress has been recently 
made about existence of Gibbs measures and phase transitions (see \cite{HeAnYuZo:2013}, \cite{anatoli} for details). Employing  transfer 
matrix techniques  and in particular the Krein-Rutman theory of
positivity-preserving operators, 
\cite{HeAnYuZo:2013}  determines  a region in the quadrant of parameters $\beta,\mu >0$ 
where the finite-volume free energy  has a thermodynamic limit. In this article the authors  provide first  approximations  of the 
infinite-volume free energy, and conditions for  existence of Gibbs measures. Furthermore, the authors  prove that in the determined 
region the 
annealed model presents a subcritical behaviour, i.e., there exists a unique Gibbs distribution for the model, and that with respect to 
this limit measure the average surface has a form of an infinite narrow tube,  
which is essentially one dimensional. 

Computation of the critical curve (point) for models in statistical mechanics has  always been a challenging problem, and since FK models were 
introduced  by Fortuin and Kasteleyn in  1969 (see \cite{FK:1972}), they have become an 
important tool in the study of phase transition for spin models. The goal of this article is  to use  the FK representation   of 
the Ising model  coupled to CDT in order to obtain lower and upper bounds for the critical
curve. We employ  the results of \cite{HeAnYuZo:2013}  to establish a region 
where the N-strip Gibbs measure has no weak limit as $N\to\infty$, where the boundary of this region provides a lower bound for the critical 
curve. Futhermore, we expand the region of subcritical behaviour of the model, 
and provide a better approximation to the free energy.
The aforementioned FK representation utilizes a family of  Poisson point processes and the Lie-Trotter product formula to interpret exponential
sums of operators as random operator products.  This representation was originally derived in  \cite{AizKleinNewman} (see \cite{Aizenman}, 
\cite{Ioffe} for an overview). 

Results presented in this paper, along with the results in \cite{HeAnYuZo:2013}, 
can be utilized to determine a region in the quadrant of parameters $\beta,\mu >0$ where 
the critical curve 
is possibly located. Moreover, the FK representation  allows to compute 
lower and upper bounds  for the infinite-volume  free energy.

The paper  is organized as follows. In  Sections \ref{Sect2.1}-\ref{Sect2.3}  we  review the main features of Lorentzian 
CDTs  and describe the Ising model coupled to CDT. In addition, we  define the free energy and provide some important properties of the model. 
In Section \ref{Sect2.4} we present the main results 
(Theorems \ref{theomain1} and \ref{theomain2}). In Section \ref{FKrepresent} we review  the FK representation  of 
Ising model  coupled to CDT. Sections \ref{Sect3.2} and \ref{Sect3.3} contain the proofs of the theorems formulated in 
Section \ref{Sect2.4}.

\section{The notation and results}\label{Sect2}

We start with an introduction to the basic features of CDTs. More details and properties about causal triangulations can be be found in 
\cite{anatoli}, \cite{SYZ1}, \cite{SYZ2}, \cite{Ambjorn:1997di}, \cite{Durhuus:2009sm},\cite{HeAnYuZo:2013}, \cite{MYZ2001}.

\subsection{Two-dimensional Lorentzian models}\label{Sect2.1}

As in \cite{HeAnYuZo:2013}, we will work 
with rooted causal dynamic triangulations of the cylinder 
$C_N = {\mathcal S}\times [0,N]$, $N = 1, 2, \dots$, which 
have $N$ bonds (strips) ${\mathcal S}\times [j,j+1]$.  Here ${\mathcal S}$ stands for
a unit circle (see Figure \ref{fig1}). Formally, a triangulation $\ubft$ of $C_N$ is called a {\it rooted causal dynamic 
triangulation} (CTD) if the following conditions hold:
\begin{itemize}
\item each triangular face of $\ubft$ belongs to some strip $\mathcal S \times [j, j + 1]$, $j =
1, \dots, N-1$, and has all vertices and exactly one edge on the boundary
$(\mathcal S \times \{j\}) \cup (\mathcal S\times \{j+1\})$ of the strip $\mathcal S\times [j, j + 1]$;
\item the number of edges on $\mathcal S \times \{j\}$ should be finite for any $j = 0, 1, \dots, N-1$: let  $n^j = n^j(\ubft)$ be  the number of edges on $\mathcal S \times \{j\}$, then 
$1 \leq n^j < \infty$ for all $j = 0, 1, \dots, N-1$.
\end{itemize}
and have a root face, with the anti-clockwise ordering of its vertices $(x,y,z)$, where $x$ and $y$ lie
in ${\mathcal S} \times\{0\}$.

The CDTs arise naturally when physicists attempt to define a
fundamental path integral in quantum gravity. See \cite{Ambjorn:2006} for a review of the relevant literatute; for 
a rigorous mathematical background of the model, cf. \cite{MYZ2001}. Additional properties of CDTs have been studied in
\cite{SYZ1}.

A rooted CDT $\ubft$  of $C_N$ is identified with a compatible sequence
$$\ubft = (\bft(0), \bft(1), \dots, \bft (N-1)),$$ where $\bft (i)$ is a triangulation of the 
strip ${\mathcal S}\times [i,i+1]$. The compatibility means that 
\begin{equation}\label{compat}
n_{up}(\bft(i+1))=n_{do}(\bft(i)),\quad i= 0,\dots, N-2.
\end{equation}

Note that for any edge lying on the slice $\mathcal S \times \{ i\}$ belongs to exactly two 
triangles: one up-triangle from $\bft (i)$ and one down-triangle from $\bft (i-1)$. This provides 
the following relation: the number of triangles in the triangulation $\ubft$, denoted by $n(\ubft)$, is twice the total number of 
edges on the slices. More precisely, remind that  $n^i$ is the number of edges on slice $\mathcal S \times \{i\}$. 
Then, for any $i=0,1,\dots, N-1$,
\begin{equation}\label{et2-yamb}
n(\bft(i)) = n_{up} (\bft (i)) + n_{do}(\bft (i)) = n^i + n^{i+1}.
\end{equation}

In the physical approach to statistical models, the computation of the partition function is the first step towards 
a deep understanding of the model, enabling for instance the computation of the free energy. Follow this approach, the 
computation of the partition function for the case of pure CDTs, was first introduced and computed in \cite{Ambjorn:1998xu} 
(see also \cite{MYZ2001} for a mathematically rigorous account). 

Considering triangulations with  periodical
spatial boundary conditions, i.e. the strip $\bft (N-1)$ is compatible with $\bft (0)$, and let $\LT_N$ denote the set of causal 
triangulations on the cylinder $C_N$ with this boundary condition, thus the partition function for rooted CDTs in the cylinder $C_N$ with periodical
spatial boundary conditions and for the value of the cosmological constant $\mu$ is given by
\begin{equation} \label{yamb-pf1}
Z_N(\mu)=\sum_{\ubft} e^{-\mu n(\ubft) } = \sum_{(\bft(0), \dots, \ubft(N-1))} \exp \Bigl\{-\mu \sum_{i=0}^{N-1} n(\bft(i)) \Bigr\}.
\end{equation}

Moreover, the periodical spatial boundary condition on the CDTs permits to write the partition function  
$Z_N(\mu)$ in a trace-related form 
\begin{equation}\label{yamb-pf1-2}
Z_N(\mu)= \mbox{tr}\; \bigl( U^N \bigr).
\end{equation}
This gives rise to a transfer matrix $U=\{u(n,n^\prime)\}_{n,n^\prime = 1, 2, \dots}$
describing the transition from one spatial strip to the next one. It is an infinite matrix with
positive entries
\begin{equation}\label{yamb-tmpg}
    u(n,n^\prime) = \binom{n+n^\prime-1}{n-1}e^{-\mu(n+n^\prime)}.
\end{equation}

\begin{figure}[t]
\begin{center}
\includegraphics[height=6cm,width=12cm]{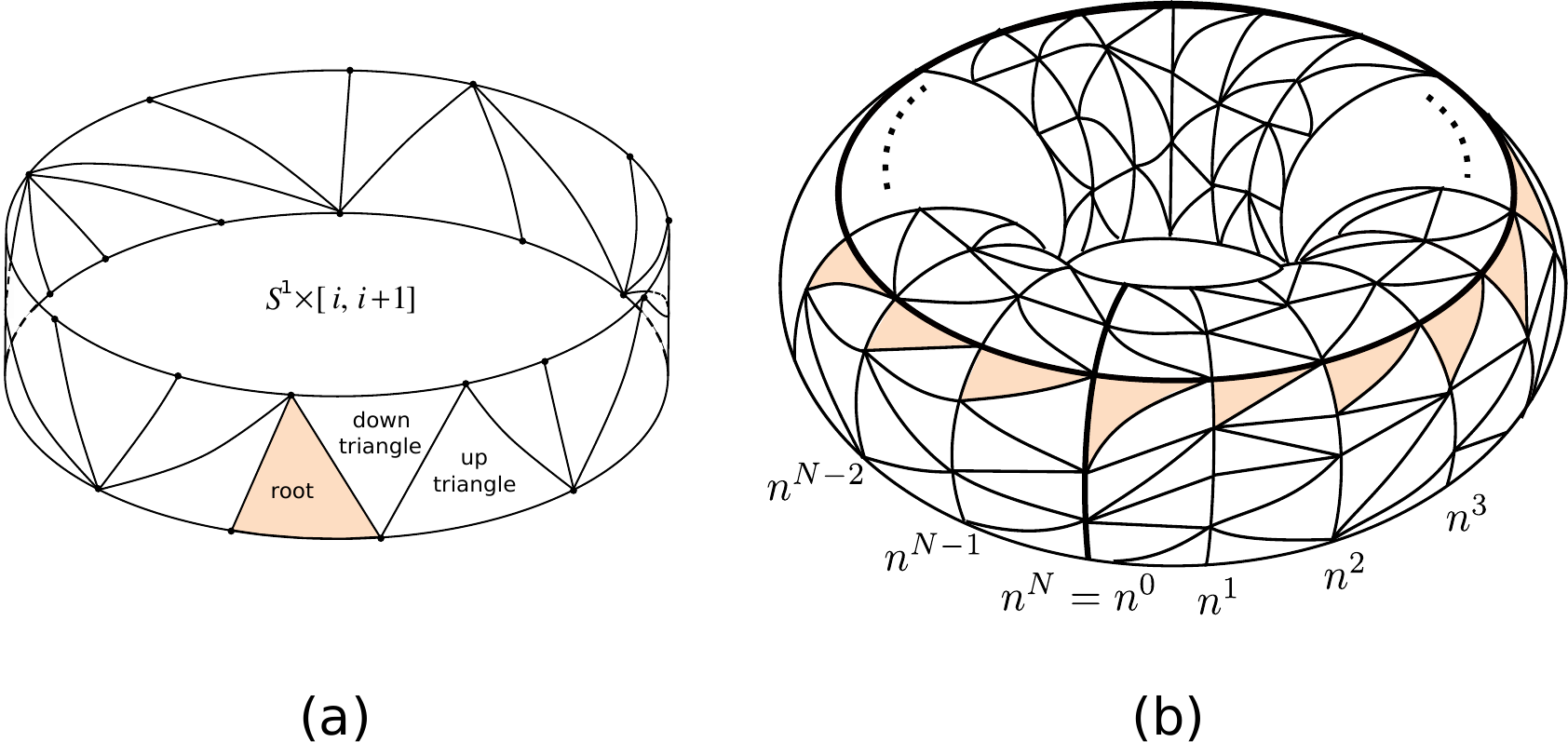}
\end{center}
\caption{(a) A strip triangulation of ${\mathcal S}\times [j,j+1]$. (b) Geometric representation of a CDT
with periodic spatial boundary condition.}%
\label{fig1}
\end{figure}
Employing the $N$-strip partition function for pure CDTs with periodical boundary condition, defined by the formula 
(\ref{yamb-pf1}), we define the  $N$-strip Gibbs probability distribution for pure CDTs
\begin{equation}\label{Q}
 \mathbb{Q}_{N,\mu}(\ubft)=\frac{1}{Z_N(\mu)} \mbox{e}^{-\mu n(\ubft)}.
\end{equation}
The asymptotic properties of the partition function of the model and existence of the weak limit of the measure 
$\mathbb{Q}_{N,\mu}$ for $\mu\geq \ln2$ are provided in \cite{MYZ2001}. These results were obtained by 
utilizing transfer-matrix formalism and tree parametrization of Lorentzian triangulations. In addition, 
the transfer-matrix formalism  and physical considerations suggest  that, as $N\to \infty$, the partition
function is controlled by the largest eigenvalue $\Lambda$ of the transfer matrix (\ref{yamb-tmpg}):
\begin{equation}\label{aproxLambda}
Z_N(\mu) = {\rm tr}\;U^N \sim\Lambda^N,
\end{equation}
where
\begin{equation}\label{Lambda(g)}
\Lambda:= \Lambda(\mu)=\left[ \frac{1- \sqrt{1-4\exp(-2\mu)}}{2\exp(-\mu)}\right]^2.
\end{equation}
That heuristic result was proved in \cite{HeAnYuZo:2013}.

The following properties hold and will be utilized  to  prove the main results.

\noindent {\bf Property 1.} {\rm{(Theorem 1 in \cite{MYZ2001}).}} {\it For any $\mu \geq \ln 2$ 
the following relation holds:
\begin{equation}\label{yamb-e13}
\lim_{N\to\infty}\frac{1}{N}\ln\,Z_N(\mu)=\ln\,\Lambda (\mu).
\end{equation}
Furthermore, the $N$-strip Gibbs measure $\mathbb{Q}_{N,\mu}$ converges 
weakly to a limiting measure $\mathbb{Q}_{\mu}$.}

\noindent {\bf Property 2.} {\rm{(Proposition 5, \cite{MYZ2001}).}} {\it For any 
$\mu <\ln 2$, the $N$-strip partition function $Z_N(\mu)$ exists if }
\begin{equation}\label{yamb-e14}
\mu > \ln \left( 2\cos \displaystyle\frac{\pi}{N+1} \right).
\end{equation}
Another proof  of Property 1 can be found in \cite{HeAnYuZo:2013}. In order to prove this property the authors 
utilize the transfer-matrix formalism and Krein-Rutman theorem. Inequality (\ref{yamb-e14}) in Property 2 implies that
if $\mu<\ln2$, then there exists  $N_0\in\N$ such that  $Z_N(\mu)=\infty$ if $N>N_0$.

\subsection{Ising model on random causal triangulations: Definition and preliminary results}\label{Sect2.3}

Let $\Delta(\ubft)$ and $\Delta(\bft(i))$ denote the set of triangles of the triangulation $\ubft$ and 
the set of triangles of the strip $\bft(i)$, respectively. As in \cite{HeAnYuZo:2013}, 
each triangle from the triangulation $\ubft$ is associated with a spin taking values
$\pm 1$. An $N$-strip
configuration of spins is represented by a collection
$$\ubfsigma =(\bfsigma (0), \bfsigma (1), \dots, \bfsigma (N-1))$$ 
where $\bfsigma (i)\equiv\bfsigma (\bft (i)):=\{\sigma(t)\}_{t\in \Delta(\bft(i))}$ is a configuration of spins 
$\sigma (t)$
over triangles
$t$ forming a triangulation $\bft (i)$. 

We consider a usual ferromagnetic Ising-type energy where two spins $\sigma (t)$ and $\sigma (t')$ interact if their 
corresponding supporting triangles $t$, $t'$ share a common edge. In this case the triangles $t$, $t'$ are referred to 
as nearest neighbors and denoted by $\langle t, t'\rangle$.

Thus, the Hamiltonian utilized for the  annealed model is given by
\begin{equation}\label{IsingHami}
{\mathbf h}(\ubfsigma,\ubft )=-\sum_{\langle t,t' \rangle} \sigma (t)\sigma (t').
\end{equation}

The partition function for the $N$-strip Ising model coupled to
CDT, at the
inverse temperature $\beta >0$ and the cosmological constant $\mu$, is given
by
\begin{eqnarray}\label{yamb-pf}
&& \Xi_N(\beta,\mu)=\sum_{(\bft (0),\dots,\bft (N-1))}
\exp\Bigl\{ -\mu \sum_{i=0}^{N-1} n(\bft (i)) \Bigr\} \\ \nonumber &&
\quad\times \sum_{(\bfsigma (0),\dots ,\bfsigma (N-1))}
\prod_{i=0}^{N-1}
\exp\,\Bigl\{ -\beta h(\bfsigma (i))-\beta v(\bfsigma (i),\bfsigma (i+1))\Bigr\},
\end{eqnarray}
where $n(\bft (i))$ stands for the number of triangles in the triangulation $\bft (i)$, $h(\bfsigma(i))$ denotes
the inner energy of the configuration $\bfsigma (i)$, and $v(\bfsigma (i),\bfsigma (i+1))$ is the energy of 
interaction between the neighboring triangles
belonging to the adjacent strips ${\mathcal S}\times [i, i+1]$ and ${\mathcal S}\times [i+1, i+2]$.

As  in the case of pure CDTs, application of the transfer-matrix formalism to Ising coupled to CDTs suggests that the partition 
function would be rewritten as
\begin{equation}\label{trmatrix}
\Xi_N(\beta,\mu)={\rm tr}\;\bfK^N.
\end{equation}
The  formal definition of the operator $\bfK$ and a proof of the identity (\ref{trmatrix}) can be found in
\cite{HeAnYuZo:2013}. 

Following \cite{HeAnYuZo:2013}, we define the operator $\bfK$ on the Hilbert space $\ell^2_{\rm{T-C}}$ 
where  the subscript T-C refers to triangulations 
and spin-configurations of square-summable functions
$$(\bft,\bfsigma)\mapsto \phi (\bft,\bfsigma),\;\hbox{with}\; \sum\limits_{\bft,\bfsigma}|\phi(\bft,\bfsigma)|^2<\infty,$$
where the argument $(\bft,\bfsigma )$ runs over single-strip triangulations and
supported configurations of spins, with the scalar product
$\left\langle\phi',\phi''\right\rangle_{\rm T-C}= \break
\sum_{\bft ,\bfsigma}\phi' (\bft ,\bfsigma )
{\overline{\phi''}}(\bft ,\bfsigma )$ and the induced norm $\|\phi\|_{\rm T-C}$.

As in the case of the pure CDT,  we  introduce the $N$-strip Gibbs probability distribution
associated with (\ref{yamb-pf}) as
\begin{eqnarray}\label{yamb-Gd}
\mathbb P^{\beta,\mu}_N (\ubft,\ubfsigma) &=& \frac{1}{\Xi_N (\beta,\mu )} e^{-\mu n(\ubft) -\beta {\bf h}(\ubfsigma,\ubft)}.
\end{eqnarray}
Let $\mathcal{G}_{\beta,\mu}$ be  the set of {\it Gibbs measures} given by the closed convex hull of the set
of weak limits:
\begin{equation}
 \mathbb P^{\beta,\mu}=\lim_{N\to\infty} \mathbb P^{\beta,\mu}_N.
\end{equation}
We  define the domain of parameters where the weak limit Gibbs distribution exists, as follows
\begin{eqnarray}
 \Gamma &=&\{(\beta,\mu)\in\mathbb{R}^2_{+}:\; \mathcal{G}_{\beta,\mu}\neq \emptyset \}.
\end{eqnarray}
Further, the critical curve $\gamma_{cr}$  for the Ising model coupled to CDT is defined as
\begin{equation}\label{f_cr}
 \gamma_{cr}=\partial\Gamma \cap \mathbb{R}^2_+.
\end{equation}
This critical curve is of major importance, since it splits the parameter space  into two following regions: a 
region where the Gibbs distributions exist, and a region where the Gibbs distribution does not exist, due to 
large Boltzman  factors associated with the triangulations which results in an infinite  partition function.
This behaviour is observed at zero and infinite temperature in 
the annealed model.

In order to study  behaviour of the critical curve we focus on the thermodynamic function 
which is defined as
\begin{equation}
\phi(\beta,\mu)=\lim_{N\to\infty}\phi_N(\beta,\mu)=\displaystyle\frac{1}{N}\ln \Xi_N (\beta,\mu).
\end{equation}
Note that, 
$$
\begin{array}{ccl}
\phi_N(0^+,\mu)=\displaystyle\lim_{\beta\to0^+}\phi_N(\beta,\mu)&=&\displaystyle\frac{1}{N}\ln \left(\displaystyle\sum_{\bft} \exp\left\{-\mu\sum_{i=0}^{N-1}n(\bft(i)) \right\} \displaystyle\sum_{\bfsigma} 1 \right)\\
             &=& \displaystyle\frac{1}{N}\ln Z_N(\mu-\ln2).
\end{array}
$$
Utilizing (\ref{yamb-e13}), it is easy to show that $\phi_N(0^+,\mu)$ converges  to $\ln\Lambda(\mu-\ln2)$ as $N\to \infty$. This result implies 
that the free energy exists at infinitely high temperature  for $\mu>2\ln2$,
$$\phi(0^+,\mu)=\lim_{N\to \infty}\phi_N(0^+,\mu)=\ln\Lambda(\mu-\ln2).$$

It can be noticed that,  if  $\beta=0$, and   $\mu\to 2\ln2$,  then  $\phi(0^+,\mu)\to 0$. Thus, utilizing the results 
(\ref{yamb-e13}) and (\ref{yamb-e14}) (see \cite{MYZ2001} for details), we find the infinite-volume free energy 
$\phi(0^+,\mu)$ at infinitely high temperature 
\begin{equation}\label{freeEin0}
\phi(0^+,\mu)=\left\{
\begin{array}{ccl}
\ln\Lambda(\mu-\ln2)<+\infty &,& \mu(0^+)>2\ln2\\
0    &,& \mu(0^+)=2\ln2\\
+\infty &,& \mu(0^+)< 2\ln2\\
\end{array}\right.
\end{equation}
In addition, if  the temperature tends to infinity, we obtain that the annealed model is  a pure CDT model with a critical
parameter $\mu_{cr}(0^+)=2\ln2$. Furthermore, for all $\mu\geq 2\ln2$ the Gibbs measure is unique, and as in the case of pure CDTs 
there exist two types of triangulations with respect to this measure. 
If $\mu>2\ln2$ we obtain subcritical triangulations. Utilizing the identity  (\ref{freeEin0}) and the results obtained in 
\cite{Durhuus:2006} and  \cite{MYZ2001} it can be showed that in this case the triangulations have a bounded diameter with probability 1,
which is essentially one dimensional random geometry. The boundary value  $\mu=2\ln2$  yields a two dimensional random geometry with 
Hausdorff dimension $d_H=2$ (see \cite{Durhuus:2009sm}, \cite{Durhuus:2006} and \cite{SYZ2}  for details).\\ 

In this paper we show that for any $0<\beta<\infty$ the infinite-volume free energy of the annealed model  has the same 
behaviour  as (\ref{freeEin0})  except for a small interval $I(\beta)=[f_1(\beta),f_2(\beta)]$ where the behaviour of the function 
is unknown. On the other hand, according to the numerical results (see \cite{Ambjorn:1999gi}, 
\cite{Ambjorn:2008jg}, \cite{Benedetti:2006rv}) there exists a unique value $\widetilde{\mu}_{cr}=
\widetilde{\mu}_{cr}(\beta)\in I(\beta)$  such that 
$\phi(\beta,\mu)=+\infty$ for $\mu<\widetilde{\mu}_{cr}(\beta)$,  $\phi(\beta,\mu)=0$ for $\mu=\widetilde{\mu}_{cr}(\beta)$, and 
$\phi(\beta,\mu)<+\infty$ for $\mu>\widetilde{\mu}_{cr}(\beta)$, however the exact value of $\widetilde{\mu}(\beta)$ 
has not been computed yet. The interval $I(\beta)$ is computed explicitly in the next section.\\

Note that, if $0<\beta<\infty$, for any triangulation $\bft$ the energy of any spin configuration $\bfsigma$ on 
$\bft$ is  larger than or equal to the energy of a pure configuration (all $+$'s or all $-$'s). i.e., 
${\bf h}(\bfsigma,\bft)\geq -3/2 n(\bft)$. Thus, for any $\beta>0$ the following inequality holds
$$\phi_N(\beta,\mu)< \displaystyle\frac{1}{N}\ln \left( \sum_{\bft}e^{-(\mu-\frac{3}{2}\beta -\ln2 )n(\bft)} \right)
=\displaystyle\frac{1}{N}\ln Z_N\left(\mu-\frac{3}{2}\beta -\ln2\right).$$
Therefore, 
\begin{equation}\label{BD}
\phi(\beta,\mu)=\lim_{N\to \infty}\phi_N(\beta,\mu) \leq\ln\Lambda\left(\mu-\frac{3}{2}\beta -\ln2\right).
\end{equation}
Hence, the 
inequality $\mu> \frac{3}{2}\beta+2\ln2$ provides a sufficient condition for subcritical behaviour (uniqueness of Gibbs measure)
of the Ising model coupled to CDTs, and provides a region where the infinite-volume free energy exists and is finite, i.e., 
does not have singularities.

Now, based on (\ref{yamb-e13})  and (\ref{BD}) an upper bound for the critical curve of the annealed model is given by  
\begin{equation}\label{AM}
\mu< \frac{3}{2}\beta+2\ln2\;\;\mbox{for all}\;\; (\beta,\mu)\in \gamma_{cr}.
\end{equation}

Obviously, the upper bound given by (\ref{AM}) is not sharp for all $\beta$. 
For example, for the low values of $\beta$, which correspond to high temperature, the upper bounds provided by  
\cite{HeAnYuZo:2013} and \cite{Ambjorn:1999gi} are more accurate. 
In this paper we focus on finding  lower and upper bounds for the critical curve for all values of $\beta$.

\subsection{The main results}\label{Sect2.4}
In the previous subsection we examine basic notions of statistical mechanics of the model and review some properties about 
the phase space, the free energy and Gibbs distributions. In this section we focus on presenting the main results of this paper. 
Utilizing these results we provide a first phase diagram for the Ising Model coupled to CDTs in 
the parameters $\beta$ and $\mu$, and 
compute an approximation of the infinite-volume free energy. The exact phase diagram at infinite temperature was 
computed in the previous subsection.

In order to construct an upper bound for the critical curve, we define the function  $\lambda (\beta,\mu)$  as follows
\begin{equation}\label{lambda}
\lambda (\beta,\mu)= c^2\,(m^2+1)\,({\rm cosh}\,2\beta)
\left( 1+ \sqrt{1-\frac{1}{({\rm cosh}\,2\beta)^2}\frac{(m^2 -1)^2 }{(m^2 + 1)^2}}\right),
\end{equation}
where $c$ and $m$ are determined by
\begin{eqnarray*}\label{ccccc}
c&=&\frac{\exp(\beta-\mu )}{e^{2\beta}(1-\exp(\beta-\mu ))^2 - e^{-2\mu}}, \\
\label{mmmmmm}
m &=& e^{2\beta} + (1-e^{4\beta})\exp\,(-(\beta +\mu) ).
\end{eqnarray*}
Let $\psi(\beta)$ be  a strictly increasing function defined as
\begin{equation}\label{functpsi}
\psi(\beta) = \inf\{ \mu\in\mathbb{R}^+ : \lambda(\beta,\mu)< 1  \},\quad\mbox{for}\quad \beta>0.
\end{equation}
It is easy to see that  
\begin{equation}\label{beta=0}
\psi(0)=\lim_{\beta \to 0^+} \psi(\beta)=2\ln2.
\end{equation}
In addition, it is well known   that in the 
region $\{(\beta,\mu) : \mu>  \psi(\beta) \}$ the free energy $\phi(\beta,\mu)$ is finite. Consequently, in this region the 
annealed model has a unique Gibbs measure (see \cite{HeAnYuZo:2013} for details).

Let   $\bft_{1},\dots,\bft_{k}$ be triangulations of a single strip $\mathcal S\times [0,1]$ and
$\bfsigma_{1},\dots,\bfsigma_{k}$ be their corresponding spin configurations. Given $0\leq i_1 <\cdots < i_k\leq N-1$,
we define 
a finite-dimensional  cylinder $\mathcal{C}_{i_1,\dots,i_k}= \mathcal{C}^{(\bft_1,\bfsigma_1),\dots,(\bft_k,\bfsigma_k)}_{i_1,\dots,i_k}$ as follows
\begin{equation}\label{cylinder}
\mathcal{C}_{i_1,\dots,i_k}=\left\{(\ubft,\ubfsigma)\;:\;(\bft(i_1),\bfsigma(i_1))= 
(\bft_{1},\bfsigma_{1}),\dots,(\bft(i_k),\bfsigma(i_k))=(\bft_{k},\bfsigma_{k})\right\}.
\end{equation}
\begin{theo}\label{theomain1}
If $(\beta,\mu)\in\mathbb{R}^2_{+}$, where
$$\mu < \max\left\{2\ln 2, 
\displaystyle\frac{3}{2}\ln(2\sinh\beta)+\ln2 \right\},$$
then there exist $N_0\in\N$ such that if $N > N_0$ the partition function $\Xi_N(\beta,\mu)=\infty$ . 
Moreover, the Gibbs distribution  $\mathbb{P}^{\beta,\mu}_N$ with periodic boundary conditions cannot be defined 
by utilizing the standard formula with a normalising denominator $\Xi_N(\beta,\mu)$, consequently, there is 
no limiting probability measure $\mathbb{P}^{\beta,\mu}$ as $N\to\infty$. 
\end{theo}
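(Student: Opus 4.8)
The plan is to bound $\Xi_N(\beta,\mu)$ from below by the pure-gravity partition function $Z_N$ evaluated at a shifted cosmological constant, and then invoke Property 2 to force $\Xi_N=+\infty$ for all large $N$. Writing $\Theta(\ubft):=\sum_{\ubfsigma}e^{-\beta{\mathbf h}(\ubfsigma,\ubft)}=\sum_{\ubfsigma}\prod_{\langle t,t'\rangle}e^{\beta\sigma(t)\sigma(t')}$ for the Ising partition function on a frozen triangulation, the definition (\ref{yamb-pf}) reads $\Xi_N(\beta,\mu)=\sum_{\ubft}e^{-\mu n(\ubft)}\Theta(\ubft)$, so it suffices to produce two pointwise lower bounds on $\Theta(\ubft)$: one controlled by the spin entropy and one by the ground-state energy. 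The single geometric input I need is that each triangle has three edges and each edge is shared by exactly two triangles, so the number of nearest-neighbour pairs is $E(\ubft)=\tfrac{3}{2}n(\ubft)$; this fixes both exponents below.

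First I would prove the energy bound through the elementary inequality $e^{\beta\sigma(t)\sigma(t')}\ge(\sinh\beta)\bigl(1+\sigma(t)\sigma(t')\bigr)$, which holds for $\sigma(t)\sigma(t')=\pm1$ and $\beta>0$. Multiplying over all pairs gives $\Theta(\ubft)\ge(\sinh\beta)^{E}\sum_{\ubfsigma}\prod_{\langle t,t'\rangle}\bigl(1+\sigma(t)\sigma(t')\bigr)$, and since the triangulation is connected the product vanishes unless $\ubfsigma$ is one of the two constant configurations, on each of which it equals $2^{E}$; hence $\Theta(\ubft)\ge 2\,(2\sinh\beta)^{E}=2\,(2\sinh\beta)^{3n(\ubft)/2}$. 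This is exactly the all-edges-occupied, single-cluster term of the FK representation reviewed in Section \ref{FKrepresent}. For the entropy bound I would instead apply Jensen's inequality to the uniform average over the $2^{n(\ubft)}$ configurations: for each edge $\sum_{\ubfsigma}\sigma(t)\sigma(t')=0$ by the spin-flip symmetry, so the averaged energy vanishes and $\Theta(\ubft)\ge 2^{n(\ubft)}$.

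Substituting these into $\Xi_N=\sum_{\ubft}e^{-\mu n}\Theta$ yields $\Xi_N(\beta,\mu)\ge Z_N(\mu-\ln2)$ and $\Xi_N(\beta,\mu)\ge 2\,Z_N\bigl(\mu-\tfrac{3}{2}\ln(2\sinh\beta)\bigr)$. By Property 2 there is an $N_0$ with $Z_N(\tilde\mu)=+\infty$ for all $N>N_0$ whenever $\tilde\mu<\ln2$; taking $\tilde\mu=\mu-\ln2$ and $\tilde\mu=\mu-\tfrac{3}{2}\ln(2\sinh\beta)$ shows $\Xi_N(\beta,\mu)=+\infty$ for $N>N_0$ whenever $\mu<2\ln2$ or $\mu<\tfrac{3}{2}\ln(2\sinh\beta)+\ln2$, which is the region in the statement. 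Once $\Xi_N=+\infty$ the normalisation in (\ref{yamb-Gd}) is meaningless, so $\mathbb P^{\beta,\mu}_N$ cannot be defined for $N>N_0$ and no weak limit $\mathbb P^{\beta,\mu}$ exists. The only real work is in the two lower bounds — in particular getting the single-cluster constant and the exponent $E=\tfrac{3}{2}n$ right — after which Property 2 supplies the genuine divergence; there is no further analytic obstacle beyond checking that the effective cosmological constant falls below $\ln2$. I note in passing that keeping only the two constant configurations already gives $\Theta(\ubft)\ge 2e^{3\beta n(\ubft)/2}$ and hence the larger threshold $\tfrac{3}{2}\beta+\ln2$, so the stated $\sinh$-bound, which I adopt to stay inside the FK framework, need not be optimal.
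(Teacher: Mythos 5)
Your proof is correct, and it reaches the theorem by a genuinely more elementary route than the paper's. For the $2\ln 2$ part the paper uses the same effective bound $\mathcal{Z}_N^{\beta,\ubft}\geq 2^{n(\ubft)}$, but obtains it from monotonicity in $\beta$ via the first Griffiths inequality (Proposition \ref{prop_positive}) rather than your Jensen/spin-flip-symmetry argument; either way one gets $\Xi_N(\beta,\mu)\geq Z_N(\mu-\ln 2)$. For the $\sinh$ part, however, the paper deploys the full continuum FK machinery: Poisson processes on edges, the cluster decomposition over partitions $\pi(\ubft)$, the spanning-subgraph bounds (\ref{inequalityclus}), a binomial resummation, and the case split $u>1$ (i.e. $\beta>\frac{\ln 2}{2}$), culminating in the three-term bound (\ref{eq3.24}), inequality (\ref{eq3.25}) and Proposition \ref{prop3.2}. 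Your pointwise inequality $e^{\beta\sigma(t)\sigma(t')}\geq(\sinh\beta)\bigl(1+\sigma(t)\sigma(t')\bigr)$, combined with connectedness of the dual graph, extracts in two lines exactly the dominant term $2e^{\frac{3}{2}\beta n(\ubft)}p^{\frac{3}{2}n(\ubft)}=2(2\sinh\beta)^{\frac{3}{2}n(\ubft)}$ of (\ref{eq3.24}), and the thresholds coincide because $\frac{3}{2}\beta+\ln 2+\frac{3}{2}\ln(1-e^{-2\beta})=\frac{3}{2}\ln(2\sinh\beta)+\ln 2$; moreover your argument needs no case analysis in $u$ and no restriction $\beta>\frac{\ln 2}{2}$ (harmless in the paper, since for small $\beta$ the maximum is $2\ln 2$ anyway). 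What the heavier machinery buys the paper is the representation itself, reused for the large-$\beta$ heuristics of Section \ref{FKrepresent} and for the (non-improving) $u<1$ bound (\ref{notimprov}); both proofs rest on the same final input, namely the consequence of Property 2 that $Z_N(\tilde\mu)=\infty$ for all large $N$ once $\tilde\mu<\ln 2$. Finally, your closing remark is correct and worth emphasizing: keeping only the two constant configurations gives $\Xi_N(\beta,\mu)\geq 2Z_N\bigl(\mu-\frac{3}{2}\beta\bigr)$, hence divergence already for $\mu<\frac{3}{2}\beta+\ln 2$, which strictly improves the $\sinh$ threshold of Theorem \ref{theomain1} (and hence the lower bound in Theorem \ref{theomain2}), consistently with the paper's own large-$\beta$ heuristic that the critical curve approaches $\mu=\frac{3}{2}\beta+\ln 2$.
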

\noindent Formally Theorem \ref{theomain1} states that, 
for any  finite-dimensional  cylinder  $\mathcal{C}_{i_1,\dots,i_k}$ we obtain 
$\mathbb{P}^{\beta,\mu}_N (\mathcal{C}_{i_1,\dots,i_k}) = 0$  for  $N>N_0\geq\max\{i_1,\dots,i_k\}$.\\

Let $\beta^*_1, \beta^*_2$ be positive solutions of the following equations 
\begin{equation}\label{beta1}
 2\ln 2 = \displaystyle\frac{3}{2}\ln(2\sinh\beta)+\ln2,
\end{equation}
\begin{equation}\label{beta2}
\quad \frac{3}{2}\beta + 2\ln 2 = \psi(\beta),
\end{equation}
respectively. Theorem \ref{theomain1}, along with the results in \cite{HeAnYuZo:2013}, provides two-side bounds for the 
critical curve.
 
\begin{theo}\label{theomain2} The critical curve $\gamma_{cr}$ satisfies the following inequalities.
\begin{enumerate}
 \item[(i)] If $(\beta,\mu)\in \gamma_{cr}$ and $0< \beta < \beta^*_1$, then  
 $$ 2\ln 2 \leq  \mu < \psi(\beta).$$
 \item[(ii)] If $(\beta,\mu)\in \gamma_{cr}$ and  $\beta^*_1 \leq \beta <\beta^*_2$, then  
 $$ \displaystyle\frac{3}{2}\ln(2\sinh\beta)+\ln2 \leq \mu < \psi(\beta).$$
 \item[(iii)] If $(\beta,\mu)\in \gamma_{cr}$ and   $\beta^*_2 \leq \beta <\infty$, then  
 $$ \displaystyle\frac{3}{2}\ln(2\sinh\beta)+\ln2 \leq \mu < \displaystyle\frac{3}{2}\beta +2\ln 2.$$
\end{enumerate}
\end{theo}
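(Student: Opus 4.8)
The plan is to assemble Theorem \ref{theomain2} from two independent families of bounds—one forcing $\gamma_{cr}$ to lie above certain curves (supplied by Theorem \ref{theomain1}) and one forcing it to lie below certain curves (supplied by the finiteness of the free energy)—and then, on each of the three $\beta$-intervals, to identify which of the two competing bounds is the sharper one.

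For the lower bounds I would start from Theorem \ref{theomain1}: the open set $L=\{(\beta,\mu):\mu<\max\{2\ln2,\tfrac32\ln(2\sinh\beta)+\ln2\}\}$ is a region where $\Xi_N=\infty$ for all large $N$ and no limiting Gibbs measure exists, so $L\subseteq\Gamma^{c}$. Since $L$ is open, $L\subseteq\operatorname{int}(\Gamma^{c})=(\overline\Gamma)^{c}$, whence $L\cap\overline\Gamma=\emptyset$; as $\gamma_{cr}=\partial\Gamma\subseteq\overline\Gamma$, every $(\beta,\mu)\in\gamma_{cr}$ satisfies $\mu\ge\max\{2\ln2,\tfrac32\ln(2\sinh\beta)+\ln2\}$. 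I would then resolve the maximum: writing $g(\beta)=\tfrac32\ln(2\sinh\beta)+\ln2$, one has $g(\beta)\to-\infty$ as $\beta\to0^{+}$, $g$ is strictly increasing (its derivative is $\tfrac32\coth\beta>0$), and $g(\beta_1^{*})=2\ln2$ by the defining equation (\ref{beta1}); hence $g(\beta)<2\ln2$ for $\beta<\beta_1^{*}$ and $g(\beta)>2\ln2$ for $\beta>\beta_1^{*}$. This gives the lower bound $2\ln2$ in case (i) and $g(\beta)$ in cases (ii)--(iii).

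For the upper bounds I would use two facts already available. First, inequality (\ref{AM}), obtained from the energy bound (\ref{BD}), gives $\mu<\tfrac32\beta+2\ln2$ on all of $\gamma_{cr}$. Second, by the definition (\ref{functpsi}) of $\psi$ together with the cited result of \cite{HeAnYuZo:2013}, the open region $\{\mu>\psi(\beta)\}$ lies in the interior of $\Gamma$ (finite free energy, unique Gibbs measure); being open it is disjoint from $\partial\Gamma$, so $\mu\le\psi(\beta)$ on $\gamma_{cr}$, which is the bound $\mu<\psi(\beta)$ recorded in the theorem. To decide which upper bound to report on each interval, I would compare $\psi(\beta)$ with the line $\tfrac32\beta+2\ln2$: by (\ref{beta=0}) they coincide at $\beta=0$, by the defining equation (\ref{beta2}) they coincide again at $\beta_2^{*}$, and, using the monotonicity of $\psi$ together with an analysis of the implicit relation $\lambda(\beta,\psi(\beta))=1$ (with $\lambda$ as in (\ref{lambda})), I would show that $\psi(\beta)<\tfrac32\beta+2\ln2$ for $0<\beta<\beta_2^{*}$ and $\psi(\beta)>\tfrac32\beta+2\ln2$ for $\beta>\beta_2^{*}$. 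Thus $\psi(\beta)$ is the sharper upper bound on $(0,\beta_2^{*})$ and the linear bound is sharper on $(\beta_2^{*},\infty)$.

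Finally I would verify that $\beta_1^{*}<\beta_2^{*}$, so that the three intervals in (i)--(iii) partition $(0,\infty)$ consistently, and then combine, interval by interval, the lower bound of the second step with the sharper upper bound of the third step to obtain (i), (ii) and (iii). The main obstacle is the comparison of $\psi$ with $\tfrac32\beta+2\ln2$: because $\psi$ is defined only implicitly through the condition $\lambda(\beta,\mu)<1$ and the expression (\ref{lambda}) for $\lambda$ is rather involved, establishing the crossover at $\beta_2^{*}$ with the correct sign of $\psi(\beta)-(\tfrac32\beta+2\ln2)$ on each side—as well as the existence and uniqueness of the crossover points $\beta_1^{*},\beta_2^{*}$—requires a careful study of the implicit function $\psi$ (for instance via the implicit function theorem applied to $\lambda(\beta,\psi(\beta))=1$, or by direct estimates on $\lambda$). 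Once this monotonicity and crossover analysis is in place, the assembly of the two-sided bounds is immediate.
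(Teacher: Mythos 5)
Your proposal is correct in substance and rests on exactly the same two inputs as the paper: Theorem \ref{theomain1} for the lower bounds, and the subcriticality facts (the uniqueness region $\{\mu>\psi(\beta)\}$ from \cite{HeAnYuZo:2013} together with inequality (\ref{AM})) for the upper bounds. Where you differ is in how the upper bound is transported to $\gamma_{cr}$. You argue purely topologically: open regions of uniqueness lie in $\operatorname{int}(\Gamma)$, hence miss $\partial\Gamma$. The paper instead introduces the sets $\Gamma_N$ on which $\bfK^N$ is trace class, the threshold functions $\rho_N(\beta)=\inf\{\mu:(\beta,\mu)\in\Gamma_N\}$ of (\ref{f_n}) (written $f_N$ in Proposition \ref{prop3.1}) and their monotone limit $f_{\rm T-C}$ of (\ref{f_TC}), proves $f_N\leq\psi$ on $(0,\beta^*_2)$ and $f_N\leq\tfrac32\beta+2\ln2$ on $[\beta^*_2,\infty)$ for large $N$, and then invokes spectral properties of $\bfK$ to place $\gamma_{cr}$ below the graph of $f_{\rm T-C}$. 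Your route is more elementary; the paper's buys an operator-theoretic justification (trace-class-ness, hence finiteness of $\Xi_N$ and existence of the finite-volume measures) for why those regions sit inside $\Gamma$ in the first place, rather than quoting it.

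Two further remarks. First, the step you single out as the main obstacle --- deciding whether $\psi(\beta)$ or $\tfrac32\beta+2\ln2$ is sharper --- is not needed: both upper bounds hold on all of $\gamma_{cr}$, and each item of the theorem asserts only one of them, so (i)--(iii) follow without any crossover analysis; the comparison merely explains why the statement reports $\psi$ before $\beta^*_2$ and the line after. Second, your topological argument only yields the weak inequality $\mu\leq\psi(\beta)$ (and needs $\psi$ to be at least upper semicontinuous so that $\{\mu>\psi(\beta)\}$ is open); the paper's own bound (\ref{upperbound}) is likewise non-strict, so this looseness is shared, but you should flag it rather than silently convert $\leq$ into the $<$ of items (i)--(ii).
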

\begin{rmk}
Utilizing  (\ref{beta=0}) and $(i)$ of Theorem \ref{theomain2}, we obtain the exact value of the 
critical curve at infinite temperature, i.e., $(0,2\ln2)\in \gamma_{cr}$. Employing the results of \cite{MYZ2001} we conclude
that the set of Gibbs measures at infinite temperature is a single point for $\mu\geq 2\ln2$, and is empty for $\mu< 2\ln2$.  

\end{rmk}
As a by-product of the proof of Theorems \ref{theomain1} and \ref{theomain2}, we obtain a lower and upper bound 
for the infinite-volume free energy. These bounds are provided in the following corollary.\\

\begin{figure}[t]
\begin{center}
\includegraphics[width=10cm]{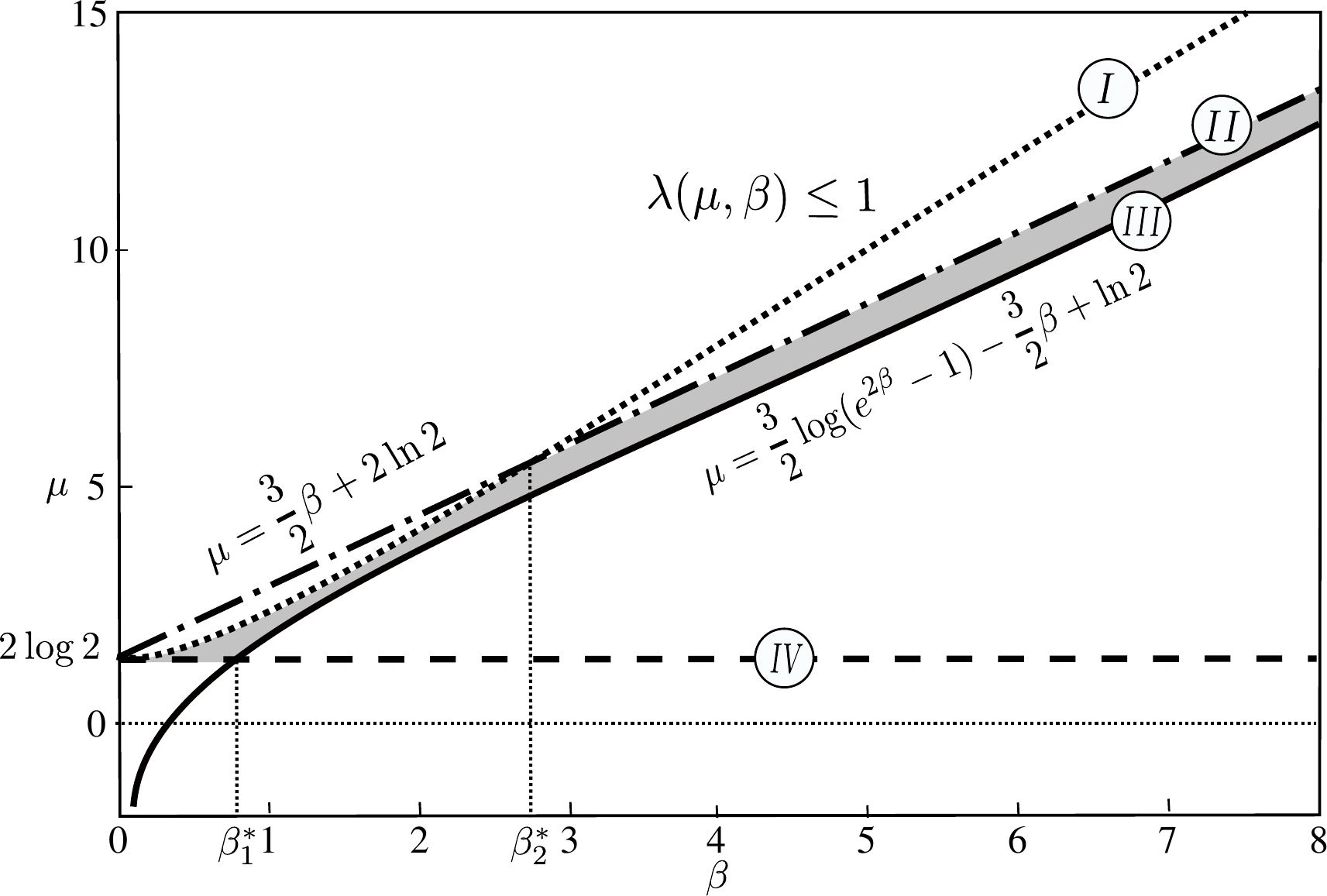}
\end{center}
\caption{The area above the minimum of the dotted curve I (graph of the function
$\psi$ defined in (\ref{functpsi})) and dash-dotted line II is where the limiting
Gibbs probability measure exists and is unique. The critical curve lies in the region below the dotted 
curve I and dash-dotted  line II but above the continuous curve III and dashed line IV.}
\label{fig2} 
\end{figure}

Let $I(\beta)$ denote the interval  $[f_1(\beta),f_2(\beta)]$, where 
\begin{equation}\label{f_1}
f_1(\beta)= \max\left\{ \ln2,  \displaystyle\frac{3}{2}\ln(2\sinh\beta)\right\},
\end{equation}
\begin{equation}\label{f_2}
f_2(\beta)= \min\left\{\psi(\beta)- \ln2,  \displaystyle\frac{3}{2}\beta+\ln2\right\},
\end{equation}
\begin{cor}\label{mainresult3} 
For $(\beta,\mu)$ such that $\mu >  f_2(\beta)$,  the free energy $\phi(\beta,\mu)$ for the 
Ising model coupled to CDTs is finite and satisfies the following inequalities,
  
 $$\ln\Lambda\left(\mu -f_1(\beta) \right)\leq \phi(\beta,\mu) 
 \leq\ln\Lambda\left(\mu - f_2(\beta)\right).$$
Here $\Lambda(s)$ is given by  (\ref{Lambda(g)}).
\end{cor}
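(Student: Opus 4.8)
\textbf{Proof proposal for Corollary \ref{mainresult3}.}

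The plan is to sandwich the free energy $\phi(\beta,\mu)$ between two pure-CDT free energies by bounding the Ising partition function $\Xi_N(\beta,\mu)$ from above and below by pure-CDT partition functions $Z_N$ evaluated at suitably shifted cosmological constants, and then to pass to the limit using Property~1. The upper bound I have essentially in hand already: the argument preceding \eqref{BD} shows that for any triangulation the Ising energy satisfies $\mathbf h(\ubfsigma,\ubft)\geq -\tfrac32 n(\ubft)$, which gives $\phi_N(\beta,\mu)\leq \tfrac1N\ln Z_N(\mu-\tfrac32\beta-\ln 2)$. The subtlety is that the corollary's upper bound uses $f_2(\beta)=\min\{\psi(\beta)-\ln 2,\ \tfrac32\beta+\ln 2\}$ rather than just $\tfrac32\beta+\ln 2$, so I must also produce the competing bound $\phi(\beta,\mu)\leq \ln\Lambda(\mu-\psi(\beta)+\ln 2)$. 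This second bound is where I would invoke the region $\{\mu>\psi(\beta)\}$ in which, by the results of \cite{HeAnYuZo:2013} recalled after \eqref{functpsi}, the free energy is already known to be finite and controlled by $\lambda(\beta,\mu)$; the two estimates combine by taking the smaller shift, which is exactly what the $\min$ in $f_2$ encodes. Taking $N\to\infty$ and using monotonicity of $\Lambda$ yields the upper inequality.

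For the lower bound, the idea is dual: I would bound the Ising energy from the other side, $\mathbf h(\ubfsigma,\ubft)\leq 0$ is too crude, so instead I restrict the spin sum to configurations that are cheap and numerous enough to capture the leading exponential order. Concretely, I would lower-bound $\Xi_N(\beta,\mu)$ by keeping only the contribution of the two pure (all-$+$ and all-$-$) configurations, or more efficiently by a Peierls-type counting argument, to obtain $\Xi_N(\beta,\mu)\geq \sum_{\ubft} e^{-\mu n(\ubft)}\cdot(\text{spin factor})$. The bookkeeping here must reproduce the shift $f_1(\beta)=\max\{\ln 2,\ \tfrac32\ln(2\sinh\beta)\}$: the $\ln 2$ term comes from the trivial lower bound of retaining a single spin configuration against the $2^{n(\ubft)}$ total (the infinite-temperature count from $\phi_N(0^+,\mu)$), while the $\tfrac32\ln(2\sinh\beta)$ term arises at large $\beta$ from summing $e^{\beta}$ per satisfied edge, matching the low-temperature bound already used in Theorem~\ref{theomain1}. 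Taking the larger shift (smaller effective cosmological constant, hence larger $Z_N$) gives the sharpest lower bound and explains the $\max$ in $f_1$.

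The final step is to feed both one-strip estimates through $\tfrac1N\ln(\cdot)$ and let $N\to\infty$, applying Property~1 (equation \eqref{yamb-e13}) to each pure-CDT partition function at its shifted argument; since $\mu>f_2(\beta)\geq f_1(\beta)$ guarantees both shifted arguments exceed the threshold $\ln 2$ needed for \eqref{yamb-e13}, the limits $\ln\Lambda(\mu-f_1(\beta))$ and $\ln\Lambda(\mu-f_2(\beta))$ are both finite, which simultaneously proves finiteness of $\phi(\beta,\mu)$ and the sandwich. The main obstacle I anticipate is the lower bound: a naive single-configuration bound will only recover the $\ln 2$ branch of $f_1$, so obtaining the $\tfrac32\ln(2\sinh\beta)$ branch requires a genuine entropy estimate on the number of low-energy spin configurations per strip, and I would need to verify that this combinatorial count is uniform enough in the triangulation geometry to survive the sum over $\ubft$ and factor cleanly out of $Z_N$.
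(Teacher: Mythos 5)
Your skeleton---sandwiching $\Xi_N(\beta,\mu)$ between pure-CDT partition functions $Z_N$ at shifted arguments and passing to the limit with (\ref{yamb-e13})---is the same as the paper's, and your upper bound follows the paper: (\ref{BD}) for the $\frac{3}{2}\beta+\ln2$ branch, and the results of \cite{HeAnYuZo:2013} for the $\psi(\beta)-\ln2$ branch. The genuine problems are in your lower bound, where the two branches of $f_1$ are attributed to the wrong mechanisms and, as a consequence, the $\frac{3}{2}\ln(2\sinh\beta)$ branch is never actually proved. Retaining a single pure configuration does \emph{not} give the $\ln2$ branch: since $\mathbf h(\ubfsigma,\ubft)=-\frac{3}{2}n(\ubft)$ for a pure configuration, it gives $\mathcal{Z}_N^{\beta,\ubft}\geq e^{\frac{3}{2}\beta n(\ubft)}$, i.e.\ the shift $\frac{3}{2}\beta$. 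The $\ln2$ branch is the statement $\mathcal{Z}_N^{\beta,\ubft}\geq\mathcal{Z}_N^{0,\ubft}=2^{n(\ubft)}$, which is not a termwise-trivial bound (configurations with positive energy have Boltzmann weight less than $1$): it is monotonicity in $\beta$, i.e.\ the first Griffiths inequality, Proposition \ref{prop_positive}, which is exactly how the paper reaches (\ref{triv_lowerbound}). The paper then obtains the $\frac{3}{2}\ln(2\sinh\beta)$ branch not from an entropy count but from the FK single-cluster estimate (\ref{eq3.25}) derived in the proof of Theorem \ref{theomain1}: its leading term is $2Z_N\bigl(\mu-\frac{3}{2}\beta-\frac{3}{2}\ln p\bigr)$ with $p=1-e^{-2\beta}$, and $e^{\beta}p=2\sinh\beta$, while the two remaining terms of (\ref{eq3.25}) have nonnegative sum. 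Ironically, the pure-configuration bound you set aside as ``only recovering the $\ln2$ branch'' would in fact suffice here, since $\Xi_N\geq 2Z_N\bigl(\mu-\frac{3}{2}\beta\bigr)\geq 2Z_N\bigl(\mu-\frac{3}{2}\ln(2\sinh\beta)\bigr)$ because $2\sinh\beta\leq e^{\beta}$ and $Z_N$ is decreasing in its argument; but as written your proposal replaces it by an unexecuted ``Peierls-type'' count whose uniformity over triangulations you yourself flag as unresolved, so half of the lower bound is missing.

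Two further gaps. Your closing claim that $\mu>f_2(\beta)\geq f_1(\beta)$ ``guarantees both shifted arguments exceed the threshold $\ln2$'' is false: $\mu>f_2$ gives only $\mu-f_2>0$, whereas (\ref{yamb-e13}) and the very definition of $\ln\Lambda$ require $\mu-f_i\geq\ln2$, i.e.\ $\mu\geq f_i+\ln2$ (compare Corollary \ref{mainresult4}, whose hypotheses carry precisely this extra $+\ln2$). And on the upper bound, knowing from \cite{HeAnYuZo:2013} that $\phi(\beta,\mu)$ is finite for $\mu>\psi(\beta)$ does not by itself yield the quantitative inequality $\phi(\beta,\mu)\leq\ln\Lambda(\mu-\psi(\beta)+\ln2)$; in the paper this rests on the transfer-operator machinery behind Proposition \ref{prop3.1} (the trace-class sets $\Gamma_N$ and the functions $\rho_N$ in the proof of Theorem \ref{theomain2}), which your proposal invokes only by name.
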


\noindent Finally, we obtain the behaviour of the free energy  for the annealed model.

\begin{cor}\label{mainresult4} 
For $\beta>0$ fixed, the free energy $-\phi(\beta,\mu)$ is an increasing function in $\mu$, for all $\mu>f_2(\beta) +\ln2 $ and  
$\phi(\beta,\mu)=\infty$ for $\mu<f_1(\beta) +\ln2 $. In addition, $\lim_{\beta\to0^+}I(\beta)= \{\ln2\}$ and 
$\phi(0^+,\mu)=\ln \Lambda(\mu-\ln2)$.
\end{cor}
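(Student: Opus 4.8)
The plan is to read off each assertion of the corollary from results already in hand, so that essentially no new machinery is needed. I would first record the elementary fact that, for fixed $\beta$, the finite-volume thermodynamic function $\phi_N(\beta,\mu)$ is strictly decreasing in $\mu$. This is immediate from the definition (\ref{yamb-pf}): writing $n(\ubft)=\sum_{i=0}^{N-1}n(\bft(i))$ and using that under periodic boundary conditions $n(\ubft)=2\sum_{i}n^i\ge 2N$ (each slice carries at least one edge), for $\mu_1<\mu_2$ one factors out $e^{-(\mu_2-\mu_1)n(\ubft)}$ term by term to obtain $\Xi_N(\beta,\mu_2)\le e^{-2N(\mu_2-\mu_1)}\,\Xi_N(\beta,\mu_1)$. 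Taking $\tfrac1N\ln(\cdot)$ and letting $N\to\infty$ yields $\phi(\beta,\mu_2)\le \phi(\beta,\mu_1)-2(\mu_2-\mu_1)$, hence $-\phi(\beta,\mu_2)\ge -\phi(\beta,\mu_1)+2(\mu_2-\mu_1)$. On the range $\mu>f_2(\beta)+\ln2=\min\{\psi(\beta),\tfrac32\beta+2\ln2\}$ both values are finite by Corollary \ref{mainresult3} (equivalently by (\ref{BD}) together with the finiteness region $\mu>\psi(\beta)$), so the inequality is meaningful and gives strict monotonicity of $-\phi$ in $\mu$, with slope at least $2$.

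For the divergence statement I would simply match thresholds. By definition (\ref{f_1}), $f_1(\beta)+\ln2=\max\{2\ln2,\ \tfrac32\ln(2\sinh\beta)+\ln2\}$, which is exactly the hypothesis of Theorem \ref{theomain1}. That theorem provides $N_0\in\N$ with $\Xi_N(\beta,\mu)=\infty$ for all $N>N_0$, whence $\phi_N(\beta,\mu)=\tfrac1N\ln\Xi_N(\beta,\mu)=+\infty$ and therefore $\phi(\beta,\mu)=+\infty$ whenever $\mu<f_1(\beta)+\ln2$. No further computation is required here.

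The limit of the interval $I(\beta)=[f_1(\beta),f_2(\beta)]$ is a direct calculation. As $\beta\to0^+$ we have $2\sinh\beta\to0$, so $\tfrac32\ln(2\sinh\beta)\to-\infty$ and hence $f_1(\beta)\to\max\{\ln2,-\infty\}=\ln2$. For $f_2$, the identity (\ref{beta=0}) gives $\psi(\beta)\to2\ln2$, so $\psi(\beta)-\ln2\to\ln2$, while $\tfrac32\beta+\ln2\to\ln2$; taking the minimum in (\ref{f_2}) gives $f_2(\beta)\to\ln2$. Thus both endpoints converge to $\ln2$ and $I(\beta)\to\{\ln2\}$. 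For the value at infinite temperature I would either invoke the computation already carried out before the main results, where $\phi_N(0^+,\mu)=\tfrac1N\ln Z_N(\mu-\ln2)$ combined with Property 1 yields $\phi(0^+,\mu)=\ln\Lambda(\mu-\ln2)$, or obtain the same conclusion by squeezing: letting $\beta\to0^+$ in the bounds $\ln\Lambda(\mu-f_1(\beta))\le\phi(\beta,\mu)\le\ln\Lambda(\mu-f_2(\beta))$ of Corollary \ref{mainresult3} and using the continuity of $\Lambda$ together with $f_1(\beta),f_2(\beta)\to\ln2$.

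I do not expect a genuine obstacle, since every piece is a corollary of Theorem \ref{theomain1}, Corollary \ref{mainresult3}, or (\ref{freeEin0}). The only points requiring care are bookkeeping ones: confirming finiteness of $\phi$ on the whole ray $\mu>f_2(\beta)+\ln2$ so that the monotonicity inequality is not an $\infty-\infty$ statement (this is exactly the region $\min\{\psi(\beta),\tfrac32\beta+2\ln2\}$ where Corollary \ref{mainresult3} applies), checking that $\Lambda$ is continuous and finite on the relevant range $\mu-\ln2\ge\ln2$ for the squeeze argument, and keeping the order of limits straight in the infinite-temperature case (first $\beta\to0^+$, then $N\to\infty$), which is precisely the order used in the derivation of (\ref{freeEin0}).
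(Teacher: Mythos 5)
Your proposal is correct and follows essentially the route the paper intends: the paper states Corollary \ref{mainresult4} without an explicit proof, as a direct assembly of Theorem \ref{theomain1} (divergence for $\mu<f_1(\beta)+\ln2$, whose hypothesis is exactly $\mu<\max\{2\ln2,\tfrac32\ln(2\sinh\beta)+\ln2\}$), Corollary \ref{mainresult3} (existence and finiteness of $\phi$ above $f_2(\beta)$), the limits (\ref{beta=0}) and $2\sinh\beta\to0^+$ for $\lim_{\beta\to0^+}I(\beta)=\{\ln2\}$, and the infinite-temperature computation (\ref{freeEin0}) for $\phi(0^+,\mu)=\ln\Lambda(\mu-\ln2)$. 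Your explicit monotonicity step, factoring out $e^{-(\mu_2-\mu_1)n(\ubft)}$ and using $n(\ubft)\ge 2N$ to get $-\phi(\beta,\mu_2)\ge-\phi(\beta,\mu_1)+2(\mu_2-\mu_1)$ on the finiteness region, correctly supplies the one ingredient the paper treats as immediate.
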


\noindent Remember that the exact solution of the free energy at $\beta=0$ was computed in (\ref{freeEin0}). 
The result presented in Corollary \ref{mainresult4} is illustrated in  Figure \ref{freeEnergy}.

\begin{figure}[h!]
\begin{center}
\includegraphics[width=10cm]{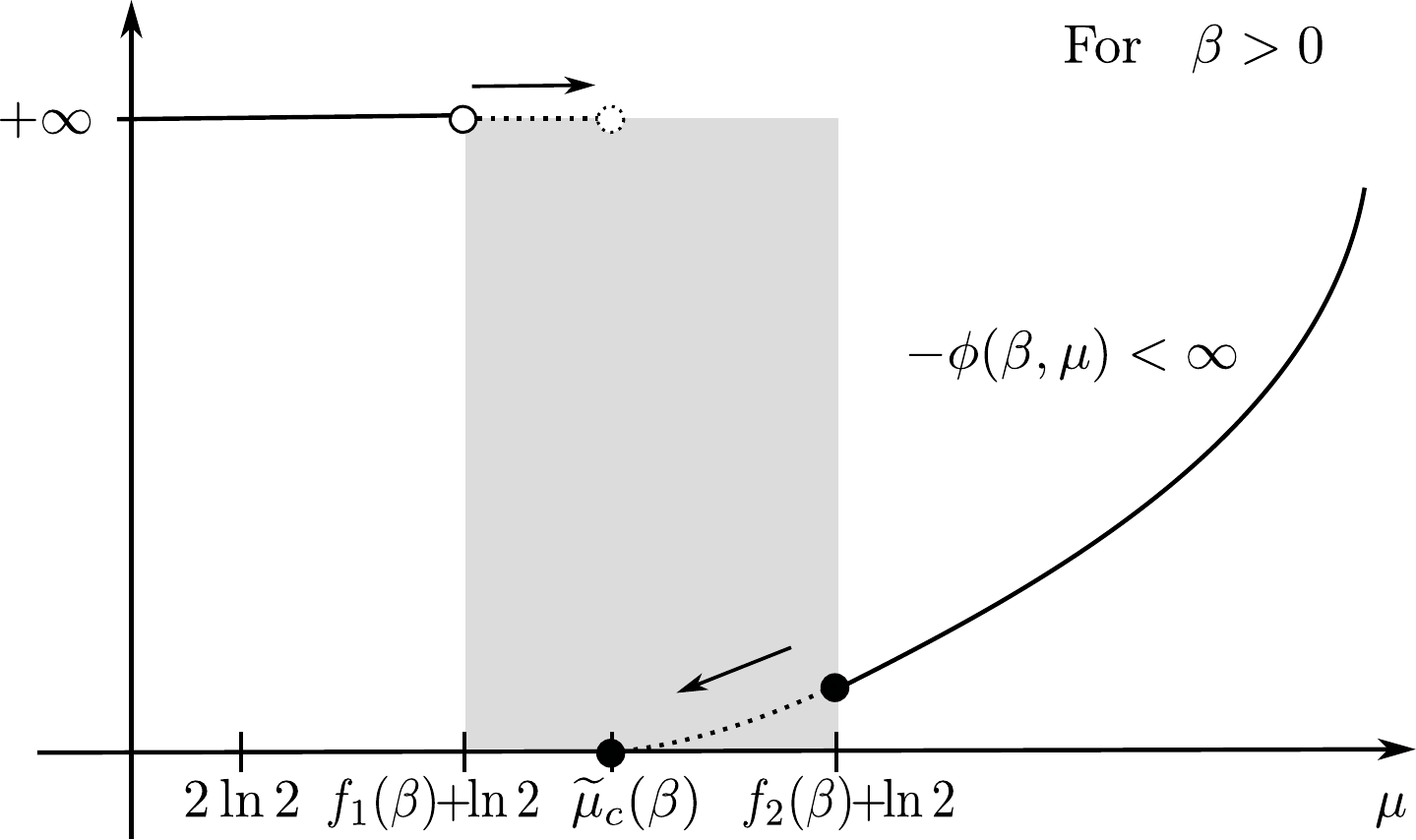}
\end{center}
\caption{Behaviour of the free energy $\phi(\beta,\mu)$ for $\beta$ fixed. In the gray region we show a possible behaviour of 
analytic continuation of the free energy on the right and left side,  according to the numerical results given in \cite{Ambjorn:1999gi}, 
\cite{Ambjorn:2008jg}, \cite{Benedetti:2006rv}.}
\label{freeEnergy} 
\end{figure}

\section{Proof of Theorem \ref{theomain1} and \ref{theomain2} }\label{Sect4}

We start this section with a brief review of the existing results for representation of the  Ising model in terms of 
a two continuum random-cluster model. We apply the results obtained in \cite{AizKleinNewman} and \cite{Ioffe} 
to  Lorentzian triangulations with periodical boundary condition.

\subsection{FK representation for Ising model coupled to CDT}\label{FKrepresent}
As mentioned above, each triangle from the triangulation $\ubft$ is associated with a spin taking values $\pm 1$. This is equivalent
to associating spins on each vertex of the dual Lorentzian triangulation. Starting from this section we shell use the same 
notation $\ubft$ for the dual Lorentzian triangulation 
with a periodical boundary condition.
Note that the Ising model defined in Subsection \ref{Sect2.3}  was defined on dual triangulation.

Let $\mathcal{Z}_N^{\beta,\ubft}$ be the partition function of the Ising model on a fixed dual Lorentzian triangulation $\ubft$, at 
inverse temperature $\beta>0$
\begin{equation}\label{2.20}
\mathcal{Z}_N^{\beta,\ubft}=\sum_{\ubfsigma\in\Omega_{\ubft}} \exp\{ -\beta {\mathbf h}(\ubfsigma,\ubft) \}.
\end{equation}
Here ${\mathbf h}(\ubfsigma,\ubft)$ represents the energy of the configuration $\ubfsigma\in\Omega_{\ubft}$, defined by
(\ref{IsingHami}). Thus, utilizing (\ref{2.20}) the partition function for the $N$-strip Ising model coupled to
CDT, at the inverse temperature $\beta>0$  for the cosmological constant $\mu$,  can be 
rewritten as 
\begin{equation}\label{2.29}
 \Xi_N(\beta,\mu)=\sum_{\ubft}\mbox{e}^{-\mu n(\ubft)} \mathcal{Z}_N^{\beta,\ubft}.
\end{equation}

For any dual Lorentzian triangulation   with periodical boundary condition $\ubft$ and  any  edge 
$e=\langle i,j\rangle$ of $\ubft$,   we select a Poisson process $\xi_e$ of points in $\{e\}\times\R$ with intensity $2$, such
that the members of the family of Poisson processes $\{\xi_e : e\in E(\ubft)\}$  are independent of each other,
where $E(\ubft)$ denotes the set of 
edges of the dual Lorentzian triangulation $\ubft$. Note that $|E(\ubft)|=\frac{3}{2}n(\ubft)$ for all $\ubft\in\LT_N$.  

Let $\mathbb{P}_{\beta,\ubft}$ denote the probability associated with the family of
Poisson processes on the interval $[0,\beta]$. An arrival of $\xi_{e=\langle i,j\rangle}$ at time $t$ 
can be geometrically interpretated as a link between  neighbor vertices 
$i$, $j$ at time $t$. We say that two vertices $t, t'$ of  $\ubft$ (not necessarily neighbor vertices)  are 
connected, if and only  if there exists a path, in the sense of continuum
percolation, connecting $t$ and $t'$ (see  \cite{Grimmett:2008} for  an overview). This property will be 
denoted by $t \leftrightarrow t'$.
The relation $\leftrightarrow$ generates a partition of the set
of vertices into clusters. Given a realization $\xi$
of the Poisson processes $\{\xi_e : e\in E(\ubft) \}$ on $[0,\beta]$, we denote by $k(\xi)$ the number of clusters in 
the realization $\xi$.
 
Papers \cite{Aizenman}, \cite{AizKleinNewman} and \cite{Ioffe} derived a representation  of the 
Ising model in terms of a continuum random-cluster model. Aplication of this representation 
to an Ising model whithout external magnetic field on  a fixed dual Lorentzian triangulation $\ubft$, 
yields the following representation for the partition function $\mathcal{Z}_N^{\beta,\ubft}$.
\begin{prop}\label{prop_Aizenman}
Let $\ubft\in\LT_N$ and $\beta>0$. Then the following identity holds
\begin{equation}\label{eq3.18}
\mathcal{Z}^{\beta,\ubft}_N = \mbox{e}^{\frac{3}{2}\beta n(\ubft)}\int 2^{k(\xi)} \mathbb{P}_{\beta,\ubft}(d\xi).
\end{equation}  
\end{prop}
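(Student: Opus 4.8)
The plan is to reduce the continuum statement to the classical discrete Fortuin–Kasteleyn expansion of the Ising model on the finite graph underlying $\ubft$, and then to recognise the product-Bernoulli bond percolation measure hidden inside the family of Poisson processes. Write $V=V(\ubft)$ for the vertex set (the triangles carrying the spins) and $E=E(\ubft)$ for the edge set of the dual triangulation, so that $|E|=\frac{3}{2}n(\ubft)$. Since $-{\mathbf h}(\ubfsigma,\ubft)=\sum_{\langle t,t'\rangle}\sigma(t)\sigma(t')$, I would first factor
\begin{equation}
\mathcal{Z}^{\beta,\ubft}_N=\sum_{\ubfsigma}\prod_{e=\langle t,t'\rangle\in E} e^{\beta\sigma(t)\sigma(t')},
\end{equation}
and apply the elementary edge identity $e^{\beta\sigma\sigma'}=e^{\beta}\big[(1-p)+p\,\mathbb{1}\{\sigma=\sigma'\}\big]$ with $p=1-e^{-2\beta}$, checked by evaluating both sides at $\sigma\sigma'=\pm1$.

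Second, I would expand the product over edges into a sum over subsets $\omega\subseteq E$ (the ``open'' edges, i.e.\ those carrying the $p\,\mathbb{1}\{\sigma=\sigma'\}$ term) and interchange this sum with the spin sum. For fixed $\omega$ the constraints $\mathbb{1}\{\sigma=\sigma'\}$ force the spin to be constant on each connected component of the graph $(V,\omega)$, so summing over $\ubfsigma$ produces the factor $2^{k(\omega)}$, where $k(\omega)$ is the number of such components (isolated vertices included). This gives the standard random-cluster form
\begin{equation}
\mathcal{Z}^{\beta,\ubft}_N=e^{\beta|E|}\sum_{\omega\subseteq E}p^{|\omega|}(1-p)^{|E|-|\omega|}\,2^{k(\omega)}
=e^{\frac{3}{2}\beta n(\ubft)}\sum_{\omega\subseteq E}p^{|\omega|}(1-p)^{|E|-|\omega|}\,2^{k(\omega)}.
\end{equation}

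Third, I would match this sum to the Poisson integral. For each edge set $\omega(\xi)=\{e\in E:\xi_e([0,\beta])\ge 1\}$, the collection of edges carrying at least one arrival. Because the processes $\{\xi_e\}$ are independent with intensity $2$ on $[0,\beta]$, the events $\{e\in\omega(\xi)\}$ are independent with $\mathbb{P}_{\beta,\ubft}(e\in\omega(\xi))=1-e^{-2\beta}=p$; hence the law of $\omega(\xi)$ under $\mathbb{P}_{\beta,\ubft}$ is exactly product Bernoulli($p$) bond percolation on $E$. It then remains to observe that $k(\xi)=k(\omega(\xi))$: since the classical model carries no ``cuts'' along the timelines $\{t\}\times[0,\beta]$, each such timeline is an unbroken segment, and a single arrival on $e=\langle t,t'\rangle$ already merges the two segments, so the continuum clusters are in bijection with the connected components of $(V,\omega(\xi))$. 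Integrating $2^{k(\xi)}=2^{k(\omega(\xi))}$ against $\mathbb{P}_{\beta,\ubft}$ therefore reproduces the displayed sum, and combining with the previous step yields (\ref{eq3.18}).

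The hard part will be the last reduction $k(\xi)=k(\omega(\xi))$: one must argue carefully that connectivity in the continuum-percolation picture on $\bigcup_{t}\{t\}\times[0,\beta]$ collapses to ordinary graph connectivity on the open edges, which is precisely where the absence of a transverse field (no timeline cuts) enters, and where the passage from the abstract representation of \cite{AizKleinNewman}, \cite{Ioffe} to the Lorentzian triangulation with periodic boundary condition has to be justified. Everything else is the routine edge-factorisation and spin-summation of the finite-graph FK identity, together with the bookkeeping $|E(\ubft)|=\frac{3}{2}n(\ubft)$.
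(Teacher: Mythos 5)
Your proof is correct, but it takes a different route from the paper, which offers no self-contained argument at all: the paper obtains Proposition \ref{prop_Aizenman} by citing the continuum random-cluster representation of Aizenman--Klein--Newman and Ioffe, i.e.\ the machinery in which $e^{-\beta H}$ is expanded via the Lie--Trotter product formula into random operator products indexed by Poisson arrivals, and then specialized to an Ising model without external field on the fixed dual triangulation $\ubft$. You instead bypass that framework entirely: you run the classical discrete Edwards--Sokal/FK expansion on the finite graph $(V(\ubft),E(\ubft))$ with $p=1-e^{-2\beta}$, and then observe that the intensity-$2$ Poisson marks on $[0,\beta]$ induce exactly product Bernoulli$(p)$ bond percolation through $\omega(\xi)=\{e:\xi_e([0,\beta])\ge 1\}$, with $k(\xi)=k(\omega(\xi))$ because the absence of a transverse field means no cuts on the timelines $\{t\}\times[0,\beta]$, so each timeline is a single segment and continuum connectivity collapses to graph connectivity. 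You correctly flag this collapse as the one point requiring care, and your justification of it is adequate. What your approach buys is elementariness and self-containedness (the whole identity reduces to finite combinatorics plus the computation $\mathbb{P}_{\beta,\ubft}(\xi_e([0,\beta])\ge 1)=1-e^{-2\beta}$, together with the bookkeeping $|E(\ubft)|=\tfrac{3}{2}n(\ubft)$ which supplies the prefactor $e^{\frac{3}{2}\beta n(\ubft)}$); what the paper's citation buys is generality, since the operator-theoretic representation continues to hold for quantum (transverse-field) models, where your discrete reduction would fail precisely because cuts fragment the timelines and $k(\xi)$ is then genuinely a continuum quantity.
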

Utilizing the $N$-strip Gibbs probability distribution $\mathbb{Q}_{N,\mu}$, defined in (\ref{Q}) for the pure CDT case, 
(\ref{2.29}) and Proposition \ref{prop_Aizenman}
we obtain the FK representation of the partition function for the annealed model, at 
the inverse temperature $\beta>0$ and 
for the cosmological constant $\mu$
\begin{equation}\label{FK_ISING}
\Xi_N(\beta,\mu) = Z_N(r)\sum_{\ubft\in\LT_N} \left\{ \int 2^{k(\xi)} \mathbb{P}_{\beta,\ubft}(d\xi) \right\} \mathbb{Q}_{N,r}(\ubft),
\end{equation}
where $r=\mu -\frac{3}{2}\beta$ and $Z_N(\cdot)$ is defined by (\ref{yamb-pf1}).
The result (\ref{FK_ISING}) provides asymptotic behaviour for the free energy, for large $\beta$, 
as follows. 

Let $\ubft=(\bft(0), \bft(1), \dots, \bft (N-1))$ be a Lorentzian triangulation. 
If $\beta$ is large enough then the probability that all the Poisson processes $\xi_e$, $e\in E(\ubft)$ had at least 
one arrival in the time interval $[0,\beta]$ tends to 1, which implies that almost all  vertices are linked, that is, 
$\mathbb{P}_{\beta,\ubft}(k(\xi)=1) \approx 1$.  
Thus, utilizing the representation (\ref{FK_ISING}) we obtain  that 
$$\Xi_N(\beta,\mu) \approx 2 Z_N(\mu-\frac{3}{2}\beta)$$
Consequently, the critical inequality 
$$\mu > \frac{3}{2}\beta + \ln 2$$
gives a necessary (and probability tight) criticality condition for the annealed model. 
Therefore, for  any $\beta$, which is sufficiently large, the free 
the free energy have the following behaviour,
$$\phi(\beta,\mu) \approx \ln \Lambda\left(\mu - \frac{3}{2}\beta\right).$$
Furthermore, this asymptotic property establish that for large 
$\beta$
$$d\left(\gamma_{cr}, \left\{\left(\beta,\frac{3}{2}\beta+\ln2\right) : \beta\;\mbox{large enough} \right\}\right)\approx 0,$$
where $d(\cdot,\cdot)$ is the Euclidean distance  between two sets. A similar heuristic analysis  was 
presented and confirmed by numerical simulations in \cite{Ambjorn:1999gi}. 

\subsection{Trivial lower bound for the critical curve}
The purpose of this section is to study the behaviour of the partition function $\Xi_N$  as a function of $\beta$, 
for  $\beta\geq 0$, $\mu>2\ln2$, $N\in\N$. We further utilize the properties of this function in order to obtain
a lower bound (not sharp) for the critical curve.

\begin{prop}\label{prop_positive}
If  $\beta_1 > \beta_2>0$, then 
$$\mathcal{Z}_{N}^{\beta_1,\ubft} \geq \mathcal{Z}_{N}^{\beta_2,\ubft},$$
for any $\ubft\in \LT_N$.
\end{prop}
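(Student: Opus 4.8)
The plan is to fix the triangulation $\ubft\in\LT_N$ and treat $\mathcal{Z}_N^{\beta,\ubft}$ as a smooth function of $\beta$ on $(0,\infty)$, then show its derivative is nonnegative. Since $\ubft$ has finitely many triangles, the configuration space $\Omega_{\ubft}$ is finite (of cardinality $2^{n(\ubft)}$), so the sum in (\ref{2.20}) is a finite sum and may be differentiated term by term with no analytic subtlety. Writing $-\mathbf{h}(\ubfsigma,\ubft)=\sum_{\langle t,t'\rangle}\sigma(t)\sigma(t')$, I would compute
\[
\frac{d}{d\beta}\,\mathcal{Z}_N^{\beta,\ubft}
=\sum_{\ubfsigma\in\Omega_{\ubft}}\Big(\sum_{\langle t,t'\rangle}\sigma(t)\sigma(t')\Big)\,e^{-\beta\mathbf{h}(\ubfsigma,\ubft)}
=\mathcal{Z}_N^{\beta,\ubft}\sum_{\langle t,t'\rangle}\big\langle \sigma(t)\sigma(t')\big\rangle_{\beta,\ubft},
\]
where $\langle\cdot\rangle_{\beta,\ubft}$ denotes expectation under the finite-volume Gibbs measure on $\Omega_{\ubft}$ with weights proportional to $e^{-\beta\mathbf{h}}$.

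The crux is then to show that each two-point function appearing on the right is nonnegative. Because the Hamiltonian (\ref{IsingHami}) is purely ferromagnetic (every coupling equals $+1$) and carries no external magnetic field, the first Griffiths inequality applies on the finite dual graph $\ubft$ and gives $\langle\sigma(t)\sigma(t')\rangle_{\beta,\ubft}\geq 0$ for every nearest-neighbor pair $\langle t,t'\rangle$ (indeed for every pair of triangles). Summing over the edges of $\ubft$, the displayed right-hand side is a nonnegative multiple of $\mathcal{Z}_N^{\beta,\ubft}>0$, so $\tfrac{d}{d\beta}\mathcal{Z}_N^{\beta,\ubft}\geq 0$ throughout $(0,\infty)$. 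Integrating this derivative from $\beta_2$ to $\beta_1$ yields $\mathcal{Z}_N^{\beta_1,\ubft}\geq\mathcal{Z}_N^{\beta_2,\ubft}$, which is the claimed monotonicity.

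The only genuinely nontrivial ingredient is the positivity of the spin-spin correlation, and I expect this to be the main point to pin down (by invoking Griffiths' inequality, with a reference, rather than reproving it). It is worth emphasizing that the FK representation (\ref{eq3.18}) does \emph{not} deliver the conclusion directly: there the prefactor $e^{\frac{3}{2}\beta n(\ubft)}$ is increasing in $\beta$, whereas the integral $\int 2^{k(\xi)}\,\mathbb{P}_{\beta,\ubft}(d\xi)$ is decreasing in $\beta$ (under the natural coupling, enlarging the time interval $[0,\beta]$ only adds arrivals, which can merge clusters and hence lower $k(\xi)$), so the two factors compete and monotonicity is not manifest. This is precisely why the derivative-plus-Griffiths route is the clean one.
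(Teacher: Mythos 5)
Your proof is correct and follows essentially the same route as the paper: differentiate $\mathcal{Z}_N^{\beta,\ubft}$ in $\beta$ (a finite sum, so term-by-term differentiation is harmless), rewrite the derivative as $\mathcal{Z}_N^{\beta,\ubft}$ times the sum of nearest-neighbor two-point functions under the Gibbs measure $\mu_\beta^{\ubft}$, and invoke the first Griffiths inequality to conclude nonnegativity. Your closing remark on why the FK representation (\ref{eq3.18}) cannot be used directly is a sensible observation not present in the paper, but the argument itself coincides with the paper's proof.
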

\begin{proof}
Utilizing the definition of the partition function  $\mathcal{Z}_{N}^{\beta,\ubft}$, it is easy to obtain the following
expression
$$
\begin{array}{ccl}
\displaystyle\frac{\partial \mathcal{Z}_{N}^{\beta,\ubft}}{\partial \beta} &=& \displaystyle\sum_{\ubfsigma} 
                      \displaystyle\sum_{\langle t,t^\prime \rangle} 
                      \sigma(t)\sigma(t^\prime) e^{-\beta \mathbf{h}(\ubfsigma,\ubft)}
                      = \mathcal{Z}_{N}^{\beta,\ubft} \displaystyle\sum_{\ubfsigma} \displaystyle\sum_{\langle t,t^\prime \rangle} 
                      \sigma(t)\sigma(t^\prime) \mu_{\beta}^{\ubft}(\ubfsigma),
\end{array}
$$
where $\mu_{\beta}^{\ubft}$ denotes the Gibbs measure of an Ising model on $\ubft$, given by
$$
\mu^{\bft}_{\beta} (\ubfsigma)= \frac{1}{\mathcal{Z}_{N}^{\beta,\ubft}}
\exp\Bigl\{-\beta\mathbf{h}(\ubfsigma,\ubft) \Bigr\},
$$
where ${\mathbf h}(\ubfsigma,\ubft)$ is defined by (\ref{IsingHami}). Thus, implementation of the first Griffiths inequality 
(see \cite{Griffiths1972} and \cite{Liggett1985} for an overview) yields
$$\displaystyle\frac{\partial \mathcal{Z}_{N}^{\beta,\ubft}}{\partial \beta}=\mathcal{Z}_{N}^{\beta,\ubft} 
\displaystyle\displaystyle\sum_{\langle t,t^\prime \rangle}\int \sigma_{\{t,t'\}} \mu_{\beta}^{\ubft}(\ubfsigma) \geq 0.
$$
This completes  the proof.
\end{proof}

Now, in particular,  $\mathcal{Z}_{N}^{\beta,\ubft}\geq \mathcal{Z}_{N}^{0,\ubft}=e^{n(\bft) \ln2}$ for all $\beta>0$. 
Thus, the following inequality can be derived 
\begin{equation}\label{triv_lowerbound}
\Xi_N(\beta,\mu) \geq Z_N(\mu-\ln2).
\end{equation}
Utilizing property (\ref{yamb-e14}) and (\ref{triv_lowerbound}), we obtain the trivial lower bound for the critical curve,
\begin{equation}
\mbox{if}\;\;(\beta,\mu)\in \gamma_{cr}, \;\;\mbox{then}\;\;  \mu>2\ln2,
\end{equation}
(see  Figure \ref{fig2}). In addition, it follows from (\ref{triv_lowerbound}) that the free energy satisfies the inequality $\phi(\beta,\mu)\geq \ln\Lambda(\mu -\ln2)$, for $\beta\geq 0$ and $\mu\geq 2\ln2$.  
Finally,  this lower bound along  with the  results in \cite{HeAnYuZo:2013} prove that 
$(0,2\ln2)\in \gamma_{cr}$, and that $\phi(0^+,\mu)= \ln\Lambda(\mu -\ln2)$ for all $\mu\geq 2\ln2$.

\subsection{Proof of Theorem \ref{theomain1}}\label{Sect3.2}
Let $\ubft$ be a dual Lorentzian triangulation on the cylinder $C_N$. 
Given $1\leq  k \leq n(\ubft)$, we define  the set of realizations $\xi$ which splits the set of vertices in $k$ clusters,
\begin{equation}\label{lambda_k}
\bfPi_k=\{\mbox{all realization}\; \xi \; 
\mbox{of}\; \{\xi_{\langle t,t'\rangle}\}\,\mbox{such that}\; k(\xi)= k  \}.
\end{equation}
Thus, we obtain the following representation of (\ref{eq3.18}) 
\begin{equation}\label{eq3.19}
\mathcal{Z}^{\beta,\ubft}_N = \mbox{e}^{\frac{3}{2}\beta n(\ubft)}\sum^{n(\ubft)}_{k=1} 2^{k} \mathbb{P}_{\beta,\ubft}(\bfPi_k).
\end{equation}

Let $\xi\in\bfPi_k$ and let  $\{\mathcal{C}_l\}^{k}_{l=1}$ be   the corresponding cluster decomposition of the  
set $\Delta(\ubft)$. In fact, each cluster $\mathcal{C}_l$ is a subgraph of $\ubft$ formed by vertices $V_l$ and edges $E_l$. 
Note that  $\mathcal{C}_l$ is a random variable and that the cluster decomposition  
$\{\mathcal{C}_l\}^{k}_{l=1}$ can  include isolated vertices. 
Denote by $\eta_l=|V_l|$ and $\kappa_l=|E_l|$,  the number of vertices (triangles) in 
cluster $\mathcal{C}_l$ and the number of edges in $\mathcal{C}_l$, respectively. 
Note that for any decomposition $\{\mathcal{C}_l\}^{k}_{l=1}$,  $\kappa_l$ and $\eta_l$ depend  on 
the geometry of the cluster $\mathcal{C}_l$. Note also that $\sum_{l=1}^k \eta_l = n(\ubft)$.

Now, denote by $\pi(\ubft)$ the set of all maximal unordered partitions of  $\Delta(\ubft)$, i.e. an element
of $\pi(\ubft)$ is an unordered $n$-ple $\{\mathcal{C}_1=(V_1,E_1),\dots,\mathcal{C}_n=(V_n,E_n)\}$ of maximal connected subgraphs 
$\mathcal{C}_i$, with 
$1\leq n\leq n(\ubft)$, such that for any $i,j\in I_n=\{1,2,\dots,n\}$, $V_i\subset \Delta(\ubft)$,
$V_i\neq \emptyset$, $V_i \cap V_j=\emptyset$ and $\cup V_i=\Delta(\ubft)$. A graph $\mathcal{C}_i$ is maximal in the 
sense that: if $t,t'$ are nearest neighbor vertices in $\mathcal{C}_i$, then $\{t,t'\}\in E_i$. 
Given a subgraph $\mathcal{C}$ belonging to some element of $\pi(\ubft)$, we define the set of 
spanning subgraphs  of $\mathcal{C}$ as 
$$Span(\mathcal{C})=\{\gamma: \gamma\; \mbox{is a connected subgraph of}\; \mathcal{C}\;\;\mbox{with}\;\;|V(\gamma)| = |V(\mathcal{C})|\},$$
that is, if $\gamma\in\mathcal{C}$, then $\gamma$ and $\mathcal{C}$ have exactly the same vertex set. 

Finally, the probability that two  nearest neighbor vertices  $t,t'$ are linked is  given by 
${\mathbb P}_{\beta,\ubft} ( t\leftrightarrow t')=1-\mbox{e}^{-2\beta}$.  
Then, denoting by $p={\mathbb P}_{\beta,\ubft} ( t\leftrightarrow t')$, we can rewrite the partition function
$\mathcal{Z}^{\beta,\ubft}_N$ as follows

\begin{equation}\label{newformpf}
\mathcal{Z}^{\beta,\ubft}_N = \mbox{e}^{\frac{3}{2}\beta n(\ubft)}(1-p)^{\frac{3}{2}n(\ubft)} \sum^{n(\ubft)}_{k=1} 2^{k} 
\sum_{\{\mathcal{C}_1,\dots,\mathcal{C}_k\}\in \pi(\ubft)}\rho(\mathcal{C}_1)\cdots \rho(\mathcal{C}_k),
\end{equation}
where 
$$\rho(\mathcal{C}) = 
 \sum_{\gamma\in\; Span(\mathcal{C})} \left(\displaystyle\frac{p}{1-p} \right)^{|E(\gamma)|}.$$
Here $|E(\gamma)|$ denotes the number of edges in the subgraph $\gamma$.
Notice that $|E(\gamma)|\geq 0$ for all $\gamma\in\mathcal{C}$.

A convenient expansion parameter for our analysis is  $u=\frac{p}{1-p}\in [0,\infty)$. 
We now are interested in deriving expansions of the function $\rho(\mathcal{C})$, 
$$\rho(\mathcal{C}) =  \sum_{\gamma\in\; Span(\mathcal{C})} u^{|E(\gamma)|},$$
for values $u>1$ and $u<1$.  
Note that by the definition of a spanning subgraph, it is easy to see that for any $\gamma\in Span(\mathcal{C})$ the following inequalities 
hold 
\begin{equation}\label{inequalityclus}
|\mathcal{C}|-1 \leq |E(\gamma)| \leq \displaystyle\frac{3}{2}|\mathcal{C}|-1,\;\; 
\mbox{if}\;\; \mathcal{C}\subsetneq \Delta(\ubft).
\end{equation}
If $\mathcal{C}=\ubft$  the unique spanning subgraph $\gamma$ that do not satisfy 
the right-hand side of inequality (\ref{inequalityclus}) is
$\gamma=\mathcal{C}$. In this case, the graph $\gamma$ has $|V(\gamma)|=n(\ubft)$ vertices and  
$|E(\gamma)|=\frac{3}{2}n(\ubft)$ edges. If $\gamma\subsetneq \mathcal{C}=\ubft$ the inequality (\ref{inequalityclus}) is satisfied.

In order to obtain lower bounds for the representation (\ref{newformpf}), we implement inequality (\ref{inequalityclus}) for two
different cases: $u > 1$ and  $u < 1$.

\vspace{0.5cm}

\noindent{\it The case $u > 1$ $\big($ equivalent to $\beta>\frac{\ln2}{2}$ $\big)$}. Utilizing
the inequality (\ref{inequalityclus}), we obtain that the following inequality holds for $\gamma\neq \ubft$
$$u^{|\mathcal{C}|-1}\leq u^{|E(\gamma)|}\leq  u^{\frac{3}{2}|\mathcal{C}| -1}.$$
Thus, if $\{\mathcal{C}_1,\dots,\mathcal{C}_k\}\in\pi(\ubft)$ and $k\geq 2$ (the case $k = 1$ will treate separately later), we obtain 
\begin{equation}\label{firstineq}
u^{n(\ubft)-k}\prod_{i=1}^kf(\mathcal{C}_i)\leq 
\prod_{i=1}^k\rho(\mathcal{C}_i) \leq u^{\frac{3}{2} n(\ubft)-k}
\prod_{i=1}^kf(\mathcal{C}_i),
\end{equation}
where  $f(\mathcal{C})=\sum_{\gamma\in Span(\mathcal{C})}1$, that is, the number of spanning subgraphs contained 
in $\mathcal{C}$. 
Further,  we rewrite the partition function $\mathcal{Z}^{\beta,\ubft}_N$ as follows
$$\mathcal{Z}^{\beta,\ubft}_N= e^{\frac{3}{2}\beta n(\ubft)} (1-p)^{\frac{3}{2}n(\ubft)}\left(  2\rho(\Delta(\ubft)) +\sum_{k=2}^{n(\ubft)} 2^k 
\sum_{\{\mathcal{C}_1,\dots,\mathcal{C}_k\}\in \pi(\ubft)}\rho(\mathcal{C}_1)\cdots \rho(\mathcal{C}_k)\right),$$
where $\rho(\Delta(\ubft))=  \sum_{\gamma\in\; Span(\Delta(\ubft))} u^{|E(\gamma)|}$. \\

\noindent Now, observing that $e^{\frac{3}{2}\beta n(\ubft)} (1-p)^{\frac{3}{2}n(\ubft)}=e^{-\frac{3}{2}\beta n(\ubft)}$ and  employing 
the inequality (\ref{firstineq}), we obtain the following lower bound
\begin{equation}\label{1lowerb}
\begin{array}{ccl}
\mathcal{Z}^{\beta,\ubft}_N \!\!\!\!\!&\geq&\!\!\!\!\! e^{-\frac{3}{2}\beta n(\ubft)}\left\{  2\rho(\Delta(\ubft)) + 
u^{n(\ubft)} \displaystyle\sum_{k=2}^{n(\ubft)} \left(\frac{2}{u}\right)^{k}\!\!\!\!\!\!
\displaystyle\sum_{\{\mathcal{C}_1,\dots,\mathcal{C}_k\}\in \pi(\ubft)}\prod_{i=1}^kf(\mathcal{C}_i)\right\}   
\end{array}
\end{equation}
Notice that, given any decomposition $\{\mathcal{C}_1,\dots,\mathcal{C}_k\}\in \pi(\ubft)$,  
$f(\mathcal{C}_i)\geq 1$ for all $i\in\{1,\dots,k\}$, where  equality holds only  if $\mathcal{C}_i$ is a spanning tree. 
Further, if the degree of freedom, $k$ is large enough,  there exists  positive proportion of decompositions 
$\{\mathcal{C}_1,\dots,\mathcal{C}_k\}$ of $\ubft$ with $f(\mathcal{C}_i)= 1$ for all $k$. Thus, the lower bound can be obtain as
$\prod_{i=1}^kf(\mathcal{C}_i) \geq 1$. For example, for $k=n(\ubft), n(\ubft)-1$ and $n(\ubft)-2$, all
elements of any decomposition are spanning trees.
Substituting the lower  bound  in  (\ref{1lowerb}), we get the following inequalities
$$
\begin{array}{ccl}
\mathcal{Z}^{\beta,\ubft}_N & \geq & e^{-\frac{3}{2}\beta n(\ubft)}\left\{  2\rho(\Delta(\ubft)) + 
u^{n(\ubft)} \displaystyle\sum_{k=2}^{n(\ubft)} \left(\frac{2}{u}\right)^{k}
\displaystyle\sum_{\{\mathcal{C}_1,\dots,\mathcal{C}_k\}\in \pi(\ubft)}1 \right\} \\
      &  &   \\
      & \geq & e^{-\frac{3}{2}\beta n(\ubft)}\left\{  2\rho(\Delta(\ubft)) + 
u^{n(\ubft)} \displaystyle\sum_{k=1}^{n(\ubft)-1} \left(\frac{2}{u}\right)^{k+1}
\binom{n(\ubft)-1}{k} \right\}\\
      &  &   \\
      & = & e^{-\frac{3}{2}\beta n(\ubft)}\left\{  2\rho(\Delta(\ubft)) + 
2u^{n(\ubft)-1}\left( \left( 1+\displaystyle\frac{2}{u} \right)^{n(\ubft)-1} -1\right) 
\right\}\\
      &  &   \\
      & = & e^{-\frac{3}{2}\beta n(\ubft)}\left\{  2\rho(\Delta(\ubft)) - 
2 u^{n(\ubft)-1} + 2\left(2+u\right)^{n(\ubft)-1} \right\}
\end{array}
$$
Now, the function $\rho(\Delta(\ubft))(u)$ can be rewritten as
$$
\begin{array}{ccl}
\rho(\Delta(\ubft))&=&
u^{\frac{3}{2} n(\bft)} + 
\displaystyle\sum_{\substack{\gamma\in\; Span(\Delta(\ubft)):\\|E(\gamma)|<\frac{3}{2} n(\ubft)}} 
u^{|E(\gamma)|}.
\end{array}
$$
If $\mu>1$ and  number of strip $N$ is large enough,  we can conclude  that the behaviour of the function 
$\rho(\Delta(\ubft))(u)$ can be described by the term $u^{\frac{3}{2} n(\ubft)}$.  In fact, it is easy to see  that 
$\rho(\Delta(\ubft))>u^{\frac{3}{2} n(\ubft)}$ for any triangulation $\ubft$.
Utilizing this result, we derive the following lower bound for the partition function $\mathcal{Z}^{\beta,\bft}_N$ of the Ising model 
on any triangulation $\ubft$ and $N\in\N$,
\begin{equation}\label{eq3.24}
\begin{array}{ccl}
\mathcal{Z}^{\beta,\bft}_N & \geq & 2e^{\frac{3}{2}\beta n(\bft)}p^{\frac{3}{2}n(\bft)} - 
2e^{\frac{3}{2}\beta n(\bft)}p^{n(\bft)-1} (1-p)^{\frac{1}{2}n(\bft)+1}\\
 & & \\
 & & +\; 2e^{\frac{3}{2}\beta n(\bft)}(2-p)^{n(\bft)-1} (1-p)^{\frac{1}{2}n(\bft)+1}.
\end{array}
\end{equation}
Implementation of identity (\ref{FK_ISING}), and lower bound  (\ref{eq3.24})  provides the following lower bound for the partition
function of the annealed model,
\begin{equation}\label{eq3.25}
\begin{array}{ccl}
\Xi_N(\beta,\mu) &\geq& 2Z_N\left(\mu-\displaystyle\frac{3}{2}\beta -\displaystyle\frac{3}{2}\ln p\right) - 2 Z_N\left( \mu-\displaystyle\frac{3}{2}\beta -\ln p\sqrt{1-p} \right) +\\
    & &  2 Z_N\left( \mu-\displaystyle\frac{3}{2}\beta -\ln(2-p)\sqrt{1-p}\right).
\end{array}
\end{equation}
Thus, utilizing the  asymptotic property given in (\ref{yamb-e14}), we obtain that the 
partition function $\Xi_N(\beta,\mu)$ exists only  if 
$$\mu > \frac{3}{2}\beta +\ln2 + \frac{3}{2}\ln\left(1-e^{-2\beta}\right) +\ln\left(\cos\frac{\pi}{N+1}\right)\;\;\mbox{and}\;\; \beta> \frac{\ln2}{2}.$$
Now, let us discuss the case where $N\to\infty$. 
\begin{prop}\label{prop3.2}
If $(\beta,\mu)\in\mathbb{R}^2_+$ such that  
$$\mu < \displaystyle\frac{3}{2}\beta +\ln2 + \frac{3}{2}\ln\left(1-e^{-2\beta}\right)\;\;\mbox{and}\;\; \beta> \frac{\ln2}{2},$$ 
then there exists $N_0\in\N$ such that  the partition function $\Xi_N(\beta,\mu)=+\infty$ whenever 
$N > N_0$.  
Moreover, the Gibbs distribution  $\mathbb{P}^{\beta,\mu}_N$ with periodic boundary conditions cannot be defined 
by the standard formula with  $\Xi_N(\beta,\mu)$ being a normalising denominator. Consequently, there is 
no limiting probability measure $\mathbb{P}^{\beta,\mu}$ as $N\to\infty$, which implies that 
for any  finite-dimensional  cylinder  $\mathcal{C}_{i_1,\dots,i_k}$,
$\mathbb{P}^{\beta,\mu}_N (\mathcal{C}_{i_1,\dots,i_k}) = 0$ whenever  $N>N_0\geq\max\{i_1,\dots,i_k\}$.
\end{prop}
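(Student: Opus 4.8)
The plan is to bound $\Xi_N(\beta,\mu)$ from below by a multiple of a \emph{pure}-CDT partition function $Z_N$ evaluated at a shifted cosmological constant, and then to invoke the known divergence of $Z_N$ below $\ln 2$ (the consequence of Property 2 recorded right after inequality (\ref{yamb-e14})). First I would fix a triangulation $\ubft\in\LT_N$ and retain in the random-cluster expansion (\ref{newformpf}) only the single fully-connected contribution, i.e. the term $k=1$ with the unique block $\{\Delta(\ubft)\}$ and the complete spanning subgraph $\gamma=\Delta(\ubft)$ (which has $|E(\gamma)|=\frac32 n(\ubft)$). Since every summand in (\ref{newformpf}) is nonnegative, discarding all other terms gives
$$\mathcal{Z}^{\beta,\ubft}_N\ \geq\ 2\,e^{\frac32\beta n(\ubft)}(1-p)^{\frac32 n(\ubft)}\,u^{\frac32 n(\ubft)}\ =\ 2\,e^{\frac32\beta n(\ubft)}\,p^{\frac32 n(\ubft)},$$
where $p=1-e^{-2\beta}$ and $u=p/(1-p)$. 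This is exactly the leading term of (\ref{eq3.24}); the other two terms of (\ref{eq3.24}) combine, for each fixed $\ubft$, into the nonnegative quantity $2e^{\frac32\beta n(\ubft)}(1-p)^{\frac12 n(\ubft)+1}\bigl[(2-p)^{n(\ubft)-1}-p^{n(\ubft)-1}\bigr]\geq 0$, so dropping them only weakens the bound.

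Next I would insert this estimate into $\Xi_N(\beta,\mu)=\sum_{\ubft}e^{-\mu n(\ubft)}\mathcal{Z}^{\beta,\ubft}_N$ (identity (\ref{2.29})), which yields
$$\Xi_N(\beta,\mu)\ \geq\ 2\sum_{\ubft\in\LT_N}e^{-\left(\mu-\frac32\beta-\frac32\ln p\right)n(\ubft)}\ =\ 2\,Z_N\left(\mu-\tfrac32\beta-\tfrac32\ln(1-e^{-2\beta})\right).$$
Writing $s=\mu-\frac32\beta-\frac32\ln(1-e^{-2\beta})$, the hypothesis $\mu<\frac32\beta+\ln2+\frac32\ln(1-e^{-2\beta})$ is \emph{precisely} the statement $s<\ln2$. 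By the divergence of $Z_N$ recorded after (\ref{yamb-e14}) (if $s<\ln2$ there is $N_0\in\N$ with $Z_N(s)=+\infty$ for all $N>N_0$), I conclude $\Xi_N(\beta,\mu)=+\infty$ for every $N>N_0$, which is the first assertion.

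For the remaining claims, since $\Xi_N(\beta,\mu)=+\infty$ the normalisation in (\ref{yamb-Gd}) breaks down and $\mathbb{P}^{\beta,\mu}_N$ is not a well-defined probability measure. To read off that every finite-dimensional cylinder carries vanishing mass I would argue by truncation: cap the number of edges per slice at a level $M$, so that the truncated partition function $\Xi_N^{(M)}$ and the truncated cylinder weight are finite and define a genuine probability measure $\mathbb{P}^{\beta,\mu}_{N,M}$. As $M\to\infty$ one has $\Xi_N^{(M)}\geq 2Z_N^{(M)}(s)\to\infty$ at the rate forced by $s<\ln2$, whereas fixing the finitely many strips $i_1,\dots,i_k$ appearing in $\mathcal{C}_{i_1,\dots,i_k}$ removes only finitely many of the divergent degrees of freedom, so the cylinder weight grows strictly slower; hence the ratio tends to $0$ and $\mathbb{P}^{\beta,\mu}_N(\mathcal{C}_{i_1,\dots,i_k})=0$ for $N>N_0\geq\max\{i_1,\dots,i_k\}$, ruling out any weak limit $\mathbb{P}^{\beta,\mu}$. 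This step is routine and parallels the argument underlying Theorem \ref{theomain1}.

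The main obstacle worth flagging is the $\infty-\infty$ cancellation hidden in the three-term bound (\ref{eq3.25}): its middle term carries a minus sign, and its argument $s_2=\mu-\frac32\beta-\ln(p\sqrt{1-p})$ need not exceed $\ln2$, so one cannot pass (\ref{eq3.25}) to the limit termwise. The resolution is to carry out the comparison at the level of a single fixed triangulation \emph{before} summing over $\ubft$---equivalently, to keep only the manifestly positive fully-connected contribution---so that the entire divergence is carried by the one clean term $2Z_N(s)$ with $s<\ln2$; everything else is algebra plus the already established behaviour of $Z_N$.
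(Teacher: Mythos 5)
Your proof is correct and follows essentially the paper's own route: extract the fully-connected contribution from the FK expansion (\ref{newformpf}), sum over triangulations via (\ref{2.29}) to get $\Xi_N(\beta,\mu)\geq 2\,Z_N\bigl(\mu-\tfrac{3}{2}\beta-\tfrac{3}{2}\ln(1-e^{-2\beta})\bigr)$, and invoke the divergence of $Z_N$ below $\ln 2$ recorded after (\ref{yamb-e14}). The one genuine difference is an improvement in how the comparison is organized: the paper reads the divergence off the three-term bound (\ref{eq3.25}), whose middle term enters with a minus sign and can itself be infinite in part of the regime under consideration --- its argument exceeds that of the first term by only $\tfrac{1}{2}\ln\frac{p}{1-p}$, which is arbitrarily small for $\beta$ near $\tfrac{\ln 2}{2}$, so both terms can diverge simultaneously and a termwise passage to $+\infty$ in (\ref{eq3.25}) is not justified as stated. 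Your device of combining the last two terms of (\ref{eq3.24}) into the nonnegative quantity $2e^{\frac{3}{2}\beta n(\ubft)}(1-p)^{\frac{1}{2}n(\ubft)+1}\bigl[(2-p)^{n(\ubft)-1}-p^{n(\ubft)-1}\bigr]$ for each fixed $\ubft$ \emph{before} summing (equivalently, keeping only the manifestly positive term $2e^{\frac{3}{2}\beta n(\ubft)}p^{\frac{3}{2}n(\ubft)}$) repairs exactly this point while yielding the same final inequality. As for the closing claims about $\mathbb{P}^{\beta,\mu}_N$ and the cylinders $\mathcal{C}_{i_1,\dots,i_k}$: the paper asserts these without argument, so your truncation sketch goes beyond it; the mechanism you describe is the right one (fixing finitely many strips turns the trace into a matrix element, which is suppressed relative to the full trace as the truncation is removed), though as written it remains heuristic and would need the transfer-matrix estimate to be carried out to count as a complete proof.
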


\vspace{0.5cm}

\noindent{\it The case $u< 1$ $\big($equivalent to $\beta<\frac{\ln2}{2}$ $\big)$}. 
Similarly to the first case, utilizing  (\ref{inequalityclus})  we obtain the following inequalities
$$u^{\frac{3}{2}|\mathcal{C}| -1}  \leq u^{|E(\gamma)|} \leq u^{|\mathcal{C}|-1},$$
which hold for all spanning subgraphs $\gamma\neq \ubft$. Thus, if $\{\mathcal{C}_1,\dots,\mathcal{C}_k\}\in\pi(\ubft)$ and $k\geq 2$ we obtain 
\begin{equation}\label{secondineq}
u^{\frac{3}{2}n(\ubft)-k}\prod_{i=1}^kf(\mathcal{C}_i)\leq 
\prod_{i=1}^k\rho(\mathcal{C}_i) \leq u^{n(\ubft)-k}
\prod_{i=1}^kf(\mathcal{C}_i).
\end{equation}
Employing the inequality (\ref{secondineq}), we obtain the 
lower bound for the partition function $\mathcal{Z}^{\beta,\ubft}_N $ on dual triangulation $\ubft$,
\begin{equation}\label{2lowerb}
\begin{array}{ccl}
\mathcal{Z}^{\beta,\ubft}_N \!\!\!\! & \geq &\!\!\!\!\!  e^{-\frac{3}{2}\beta n(\ubft)}\! \left\{ \! 2\rho(\Delta(\ubft))\! + \!
u^{\frac{3}{2} n(\ubft)} \displaystyle\sum_{k=2}^{n(\ubft)} \left(\frac{2}{u}\right)^{k}
\!\!\!\!\!\! \displaystyle\sum_{\{\mathcal{C}_1,\dots,\mathcal{C}_k\}\in \pi(\ubft)}\prod_{i=1}^kf(\mathcal{C}_i)\!\right\}.   
\end{array}
\end{equation}
Implementation of the same tecnique  as in the previous case yields the lower bound
for the critical curve,
\begin{equation}\label{notimprov}
\mu < \displaystyle\frac{3}{2}\beta +\ln2 + \frac{1}{2}\ln\left(1-e^{-2\beta}\right)+\ln\left(1+e^{-2\beta}\right)
\;\;\mbox{and}\;\; \beta< \frac{\ln2}{2}.
\end{equation}

\vspace{0.4cm}
\begin{proof}[PROOF OF THEOREM \ref{theomain1}.]
The proof of Theorem \ref{theomain1} follows immediately from  Proposition \ref{prop3.2} and the lower bound given by the Griffiths inequality.
\end{proof}

\begin{rmk}
The lower bound given by (\ref{notimprov}) does not improve the lower bound given by the first Griffiths inequality.
\end{rmk}

\subsection{Proof of Theorem \ref{theomain2}}\label{Sect3.3}

For each $N\in \N$, we define the following sets in $\mathbb{R}^2_{+}$
\begin{eqnarray}
\Gamma_N &=&\{(\beta,\mu)\in\mathbb{R}^2_{+}:\;\bfK^N\quad \mbox{is a trace class in}\;\ell^2_{\rm{T-C}}\}, 
\end{eqnarray}
\begin{eqnarray}
\Gamma^{-} =\bigcap_{N\in\mathbb{N}}\Gamma_N \quad & \mbox{and} & \quad  \Gamma^{+} =\bigcup_{N\in\mathbb{N}}\Gamma_N. 
\end{eqnarray}
Note that $\Gamma^{-}\subset \Gamma_N \subset \Gamma^{+}$ for all $N\geq 1$, $\Gamma_N\uparrow \Gamma^{+}$, and that the 
finite-volume Gibbs measure $\mathbb P^{\beta,\mu}_N$  on the set $\Gamma_N $  exists.  We also define  the 
$N$-strip functions $\rho_N$  associated with the  
set $\Gamma_N$,  for each $N\geq 1$, as follows
\begin{equation}\label{f_n}
\rho_N(\beta) = \inf\{ \mu\in\mathbb{R}^2_+ : (\beta,\mu)\in  \Gamma_N  \}\quad\mbox{for}\quad \beta\geq 0.
\end{equation}
By the properties of the trace class operators (see \cite{Ringrose} for an overview), we deduce that $\{\rho_N\}$ is a monotone decreasing sequence of measurable functions 
such that $f_1(\beta)\leq \rho_N(\beta)$, for all $\beta>0$, where $f_1$ is defined in (\ref{f_1}). Thus, we prove the existence of 
the pointwise limit 
\begin{equation}\label{f_TC}
\rho_{\rm{T-C}}(\beta):=\lim_{N\to\infty} \rho_N(\beta)\quad\mbox{for}\quad\beta\geq 0.
\end{equation}
Another important fact is that the graph of the function $\rho_{T-C}$ provides  an  upper bound for the critical curve. This property is 
a consequence of the  spectral properties of the  operator $\bfK$ introduced in (\ref{trmatrix})(see \cite{HeAnYuZo:2013} for the details 
and \cite{Ringrose} for an overview) and the definition of the function $\rho_{T-C}$. 
Consequently, implementation the results of \cite{HeAnYuZo:2013} and the  condition of subcriticality (\ref{AM}), we  can prove  the 
following proposition.
\begin{prop}\label{prop3.1}
There exist $N_0\in\N$ such that  for  $N> N_0$, the following property of functions $f_N$ is  fulfilled: 
\begin{enumerate}
 \item If $0< \beta < \beta^*_2$, then 
 \begin{equation}\label{eq5.2}
  f_{N}(\beta) \leq \psi(\beta),
 \end{equation}
 where $\beta^*_2$ is a positive solution of $(\ref{beta2})$ and  $\psi$ is defined in $(\ref{functpsi})$.
 \item If  $\beta^*_2 \leq \beta <\infty$, then 
  \begin{equation}\label{eq5.3}
 f_{N}(\beta) \leq \displaystyle\frac{3}{2}\beta +2\ln 2.
 \end{equation}
\end{enumerate} 
\end{prop}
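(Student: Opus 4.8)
The plan is to read membership in $\Gamma_N$ as a finiteness statement for the partition function and then feed in the two regions of finite free energy that are already available: the region $\{\mu>\psi(\beta)\}$ supplied by \cite{HeAnYuZo:2013} and the subcriticality region $\{\mu>\frac32\beta+2\ln2\}$ coming from (\ref{BD}). Each region will be converted into an upper bound for $f_N=\rho_N$, the function defined in (\ref{f_n}), and parts (1) and (2) will follow by restricting to the two $\beta$-ranges.

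First I would record the spectral characterisation of $\Gamma_N$. Since $\bfK$ is self-adjoint with nonnegative entries and $\mathrm{tr}\,\bfK^N=\Xi_N(\beta,\mu)$ by (\ref{trmatrix}), the pair $(\beta,\mu)$ lies in $\Gamma_N$ exactly when $\Xi_N(\beta,\mu)<\infty$. Finiteness moreover propagates upward in $N$: if $\mathrm{tr}\,\bfK^{2m}=\Xi_{2m}(\beta,\mu)<\infty$ for a single $m$, then $\bfK^m$ is Hilbert--Schmidt and hence $\bfK^N$ is trace class for every $N\ge 2m$. This is the source of the threshold $N_0$, and it reduces both parts to producing, for each admissible $\mu$, one even index $N$ with $\Xi_N(\beta,\mu)<\infty$.

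Next I would treat the two bounds. For part (1), fix $\beta\in(0,\beta^*_2)$ and any $\mu>\psi(\beta)$; by (\ref{functpsi}) this forces $\lambda(\beta,\mu)<1$, and the Krein--Rutman spectral estimates of \cite{HeAnYuZo:2013} then give a finite free energy, so $\Xi_N(\beta,\mu)<\infty$ for all large $N$. By the bootstrap above $(\beta,\mu)\in\Gamma_N$ for $N>N_0$, whence $\rho_N(\beta)\le\mu$; letting $\mu\downarrow\psi(\beta)$ gives $f_N(\beta)\le\psi(\beta)$, which is (\ref{eq5.2}). For part (2), fix $\beta\ge\beta^*_2$ and any $\mu>\frac32\beta+2\ln2$; the pre-limit form of (\ref{BD}) gives $\Xi_N(\beta,\mu)\le Z_N(\mu-\frac32\beta-\ln2)<\infty$ for large $N$, and the same argument yields $\rho_N(\beta)\le\frac32\beta+2\ln2$, i.e. (\ref{eq5.3}). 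In fact each inequality holds throughout the region where its defining bound applies; the split at $\beta^*_2$, the solution of (\ref{beta2}), merely selects the smaller of the two bounds, since the curves $\mu=\psi(\beta)$ and $\mu=\frac32\beta+2\ln2$ meet precisely there.

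The step I expect to be the main obstacle is the uniformity of $N_0$ over the whole $\beta$-range, since the Hilbert--Schmidt bootstrap is a priori only pointwise. In part (2) the range is unbounded, but the domination $\Xi_N(\beta,\mu)\le Z_N(\mu-\frac32\beta-\ln2)$ reduces the eigenvalue decay to that of the fixed transfer matrix $U$ of (\ref{yamb-tmpg}), which is $\beta$-independent and therefore furnishes a single $N_0$. In part (1) the interval $(0,\beta^*_2)$ is bounded, so a uniform $N_0$ can instead be extracted from the continuity in $(\beta,\mu)$ of the spectral bounds in \cite{HeAnYuZo:2013}. Taking the larger of the two thresholds gives one $N_0$ valid in both regimes, which completes the plan.
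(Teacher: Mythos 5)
Your proposal is correct and follows essentially the same route as the paper: the paper's own (one-line) proof likewise combines the trace-class region $\{\mu>\psi(\beta)\}$ supplied by \cite{HeAnYuZo:2013} with the subcriticality condition (\ref{AM}) coming from (\ref{BD}), reading each as membership in $\Gamma_N$ and hence as an upper bound on $f_N=\rho_N$, with the crossover at $\beta^*_2$ determined by (\ref{beta2}). Your extra care with the Hilbert--Schmidt bootstrap and the uniformity of $N_0$ merely fills in technical details the paper leaves implicit.
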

\noindent Combining (\ref{f_TC}), (\ref{eq5.2}), (\ref{eq5.3}), and letting $N\to\infty$, we obtain the desired upper bound for the 
limit function $f_{T-C}$
\begin{equation}\label{upperbound}
\left\{
\begin{array}{ccccl}
f_{\rm{T-C}}(\beta) &\leq& \psi(\beta) &\mbox{if}& 0<\beta<\beta^{*}_2\\
\\
f_{\rm{T-C}}(\beta) &\leq& \displaystyle\frac{3}{2}\beta+2\ln2 &\mbox{if}& \beta^{*}_2 \leq\beta<\infty.
\end{array}\right.
\end{equation}
\noindent Since that the graph of the function $f_{T-C}$ lies above of the critical curve, the right-hand side of (\ref{upperbound}) provides 
an upper bound for the critical curve.

\vspace{0.4cm}
\begin{proof}[PROOF OF THEOREM \ref{theomain2}.] The upper bound for the critical curve $\gamma_{cr}$ is a consequence of
inequality  (\ref{upperbound}). The lower bound is a consequence of Theorem \ref{theomain1}. This concludes the proof of the theorem.
\end{proof}


\section{Discussion}
In this article we present a step towards  improving the subcriticality domain for an Ising model
coupled to two-dimensional CDT  introduced in \cite{HeAnYuZo:2013}. In doing so we employ FK representation of the Ising model on 
causal triangulations and combinatorial approximation. In addition, we make a first step  towards determining the critical 
curve of the model. Numerical evidence shows that the model has phase transition  (see \cite{Ambjorn:1999gi}, \cite{Ambjorn:2008jg}, 
\cite{Benedetti:2006rv}), but  this fact has never been proved explicitly. In this article we presente
mathematical
evidence of existence of the critical curve by studying the infinite-volume free energy, and computing a region where the critical curve 
for the annealed 
model is possibly located (see Figure \ref{fig2}). The discussion  in Section \ref{Sect2.3} along with
Corollary \ref{mainresult3} suggest that free energy can be expresed as $\ln \Lambda(\mu -\phi(\beta))$, where the function $\phi$ 
satisfies $\lim_{\beta\to 0^+}\phi(\beta)=\ln2$. This result leads to  a plausible conjecture that the boundary 
of the subcritical 
domain coincides with 
the locus of points $(\beta,\mu)$ where $\mu=\phi(\beta)+\ln2$, however in order to obtain more precise conclusions a 
further investigation is required.

In addition, Theorem \ref{theomain1} and Theorem \ref{theomain2} show that with respect to the weak limit Gibbs 
measure $\mathbb{P}^{\beta,\mu}$, the average  diameter of any  fixed height $n$
is  bounded  in  the region $\mu>f_2(\beta)$ (by a constant independent of $n$) with probability 1, which is essentially a one 
dimensional random geometry. Thus,
we can expect that this subcritical behaviour can be extended to the region 
$\mu>\gamma_{cr}(\beta)$, and that any causal triangulation  on the critical curve   has Hausdorff dimension $d_H=2$, 
$\mathbb{P}^{\beta,\mu_{cr}}$-a.s.
Finally, if the annealed model has  this property, then the region where it presents a  phase transition is possibly located on the critical curve. 
This direction also requires  further research.

\subsection*{Acknowledgements.} 
I would like to thank Prof. Y. Suhov and Prof. A. Yambartsev  for very valuable discussions  and his encouragement. 
This work was supported by FAPESP (projects  2012/04372-7,  2013/06179-2 and 2014/18810-1). Further, the author thanks the IME at the 
University of S\~ao Paulo for warm  hospitality.

\providecommand{\href}[2]{#2}\begingroup\raggedright\endgroup

\end{document}